\newtheorem{theorem}{Theorem}
\newtheorem{lemma}{Lemma}
\newtheorem{proposition}{Proposition}
\newtheorem{corollary}{Corollary}
\newtheorem{property}{Property}
\newtheorem{remark}{Remark}
\newtheorem{claim}{Claim}
\begin{document}
\title{{Machine Intelligence at the Edge\\with Learning Centric Power Allocation}}

\author{Shuai~Wang,~\IEEEmembership{Member,~IEEE},
Yik-Chung~Wu,~\IEEEmembership{Senior~Member,~IEEE},
Minghua~Xia,~\IEEEmembership{Member,~IEEE},
Rui~Wang,~\IEEEmembership{Member,~IEEE},
and~H.~Vincent~Poor,~\IEEEmembership{Life~Fellow,~IEEE}

\thanks{

This work was supported in part by the National Natural Science Foundation of China under Grants 61771232 and 61671488, in part by the Shenzhen Basic Research Project under Grant JCYJ20190809142403596,
in part by the Natural Science Fundation of Guangdong Province under Grant 2019B1515130003, in part by the U.S. National Science Foundation under Grants CCF-0939370 and CCF-1908308, in part by the Major Science and Technology Special Project of Guangdong Province under Grant 2018B010114001, and in part by the Fundamental Research Funds for the Central Universities under Grant 191gjc04. (\textit{Corresponding Author: Yik-Chung Wu}).

Shuai Wang is with the Department of Electrical and Electronic Engineering, and also with the Department of Computer Science and Engineering, Southern University of Science and Technology, Shenzhen 518055, China (e-mail: wangs3@sustech.edu.cn).

Yik-Chung Wu is with the Department of Electrical and Electronic Engineering, The University of Hong Kong, Hong Kong (e-mail: ycwu@eee.hku.hk).

Minghua Xia is with the School of Electronics and Information Technology, Sun Yat-sen University, Guangzhou 510006, China, and also with the Southern Marine Science and Engineering Guangdong Laboratory, Zhuhai 519082, China (e-mail: xiamingh@mail.sysu.edu.cn).

Rui Wang is with the Department of Electrical and Electronic Engineering, Southern University of Science and Technology, Shenzhen 518055, China (e-mail: wang.r@sustech.edu.cn).

H. Vincent~Poor is with the Department of Electrical Engineering, Princeton University, Princeton, NJ 08544 USA (e-mail: poor@princeton.edu).

}
}

\maketitle

\begin{abstract}
While machine-type communication (MTC) devices generate considerable amounts of data, they often cannot process the data due to limited energy and computational power.
To empower MTC with intelligence, edge machine learning has been proposed.
However, power allocation in this paradigm requires maximizing the learning performance instead of the communication throughput, for which the celebrated water-filling and max-min fairness algorithms become inefficient.
To this end, this paper proposes learning centric power allocation (LCPA), which provides a new perspective on radio resource allocation in learning driven scenarios.
By employing 1) an empirical classification error model that is supported by learning theory and 2) an uncertainty sampling method that accounts for different distributions at users, LCPA is formulated as a nonconvex nonsmooth optimization problem, and is solved using a majorization minimization (MM) framework.
To get deeper insights into LCPA, asymptotic analysis shows that the transmit powers are inversely proportional to the channel gains, and scale exponentially with the learning parameters.
This is in contrast to traditional power allocations where quality of wireless channels is the only consideration.
Last but not least, a large-scale optimization algorithm termed mirror-prox LCPA is further proposed to enable LCPA in large-scale settings.
Extensive numerical results demonstrate that the proposed LCPA algorithms outperform traditional power allocation algorithms, and the large-scale optimization algorithm reduces the computation time by orders of magnitude compared with MM-based LCPA but still achieves competing learning performance.
\end{abstract}

\begin{IEEEkeywords}
Empirical classification error model, edge machine learning, learning centric communication, multiple-input multiple-output, resource allocation.
\end{IEEEkeywords}

\IEEEpeerreviewmaketitle
\section{Introduction}

\IEEEPARstart{M}ACHINE intelligence is revolutionizing every branch of science and technology \cite{ml1,dl1}.
If a machine wants to learn, it requires at least two ingredients: information and computation, which are usually separated from each other in machine-type communication (MTC) systems \cite{iot1}.
Nonetheless, sending vast volumes of data from MTC devices to the cloud not only leads to a heavy communication burden but also increases the transmission latency.
To address this challenge brought by MTC, a promising solution is the \emph{edge machine learning} technique \cite{edge1,edge2,edge3,edge4,fed1,fed2,fed3,fed4} that trains a machine learning model or \emph{fine-tunes a pre-trained model} at the edge, i.e., at a nearby radio access point with computation resources.

In general, there are two ways to implement edge machine learning: data sharing and model sharing.
Data sharing uses the edge to collect data generated from MTC devices for machine learning \cite{edge1,edge2,edge3,edge4}, while model sharing uses federated learning \cite{fed1,fed2,fed3,fed4} to exchange model parameters (instead of data) between the edge and users.
Both approaches are recognized as key paradigms in the sixth generation (6G) wireless communications \cite{6g1,6g2,6g3}.
However, since the MTC devices often cannot process the data due to limited computational power, this paper focuses on data sharing.

\subsection{Motivation and Related Work}

In contrast to conventional communication systems, edge machine learning systems aim to maximize the learning performance instead of the communication throughput.
Therefore, edge resource allocation becomes very different from traditional resource allocation schemes that merely consider the wireless channel state information \cite{waterfilling,fair,sumrate,gradient}.
For instance, the celebrated water-filling scheme allocates more resources to better channels for throughput maximization \cite{waterfilling}, and the max-min fairness scheme allocates more resources to cell-edge users to maintain certain quality of service \cite{fair}.
While these two schemes have proven to be very efficient in traditional wireless communication systems, they could lead to poor learning performance in edge learning systems because they do not account for the machine learning factors such as model and dataset complexities.
Imagine training a deep neural network (DNN) and a support vector machine (SVM) at the edge.
Due to much larger number of parameters in DNN, the edge should allocate more resources to MTC devices that upload data for the DNN than those for the SVM.

Nonetheless, in order to maximize the learning performance, we need a mathematical expression of the learning performance with respect to the number of samples, which does not exist to the best of the authors' knowledge.
While the sample complexity of a learning task can be related to the Vapnik-Chervonenkis (VC) dimension \cite{ml1}, this theory only provides a vague estimate that is independent of the specific learning algorithm or data distribution.
To better understand the learning performance, it has been proved in \cite{model3,model4} that the generalization error can be upper bounded by the summation of the bias between the model's prediction and the optimal prediction, the variance due to training datasets, and the noise of the target example.
With the bound being tight for certain loss functions (e.g., squared loss and zero-one loss), the bias-variance decomposition theory gives rise to an empirical nonlinear classification error model \cite{model1,model2,model5} that is also theoretically supported by the inverse power law derived via statistical mechanics \cite{model6}.

\subsection{Summary of Results}

In this paper, we adopt the above nonlinear model to approximate the learning performance, and a \emph{learning centric power allocation (LCPA)} problem is formulated with the aim of minimizing classification error subject to the total power budget constraint.
When the data is non-independent-and-identically-distributed (non-IID) among users, the LCPA problem can also be flexibly integrated with uncertainty sampling.
Since the formulated machine learning resource allocation problem is nonconvex and nonsmooth, it is nontrivial to solve.
By leveraging the majorization minimization (MM) framework from optimization, an MM-based LCPA algorithm that obtains a Karush-Kuhn-Tucker (KKT) solution is proposed.
To get deeper insights into LCPA, asymptotic analysis with the number of antennas at the edge going to infinity is provided.
The asymptotic optimal solution discloses that the transmit powers are inversely proportional to the channel gain, and scale exponentially with the classification error model parameters.
This result reveals that machine learning has a stronger impact than wireless channels in LCPA.
To enable affordable computational complexity when the number of MTC devices is large, a variant of LCPA, called mirror-prox LCPA, is proposed.
The algorithm is a first-order method (FOM), implying that its complexity is linear with respect to the number of users.
Extensive experimental results based on public datasets show that the proposed LCPA scheme is able to achieve a higher classification accuracy than that of the sum-rate maximization and max-min fairness power allocation schemes.
For the first time, the benefit brought by joint communication and learning design is quantitatively demonstrated in edge machine learning systems.
Our results also show that the mirror-prox LCPA reduces the computation time by orders of magnitude compared to the MM-based LCPA but still achieves satisfactory performance.

To sum up, the contributions of this paper are listed as follows.
\begin{itemize}
\item The LCPA scheme is developed for the edge machine learning problem, which maximizes the learning accuracy instead of the communication throughput.

\item To understand how LCPA works, an asymptotic optimal solution to the edge machine learning problem is derived, which, for the first time, discloses that the transmit power obtained from LCPA grows linearly with the path-loss and grows exponentially with the learning parameters.

\item To reduce the computation time of LCPA in the massive multiple-input multiple-output (MIMO) setting, a variant of LCPA based on FOM is proposed, which enables the edge machine learning system to scale up the number of MTC users.

\item Extensive experimental results based on public datasets (e.g., MNIST, CIFAR-10, ModelNet40) show that the proposed LCPA is able to achieve a higher accuracy than that of the sum-rate maximization and max-min fairness schemes.

\end{itemize}

\subsection{Outline}

The rest of this paper is organized as follows.
System model and problem formulation are described in Section II.
Classification error modeling is presented in Section III.
The MM-based LCPA algorithm, the asymptotic analysis, and the large-scale optimization method are derived in Sections IV, V, and VI, respectively.
Finally, experimental results are presented in Section VII, and conclusions are drawn in Section~VIII.

\emph{Notation}:
Italic letters, lowercase and uppercase bold letters represent scalars, vectors, and matrices, respectively.
Curlicue letters stand for sets and $|\cdot|$ is the cardinality of a set.
The operators $(\cdot)^{T}$, $(\cdot)^{H}$ and $(\cdot)^{-1}$ take the transpose, Hermitian and inverse of a matrix, respectively.
We use $(a_1,a_2,\cdots)$ to represent a sequence, $[a_1,a_2,\cdots]^{T}$ to represent a column vector, and $\left\Vert\cdot\right\Vert_p$ to represent the $\ell_p$-norm of a vector.
The symbol $\mathbf{I}_{N}$ indicates the $N\times N$ identity matrix,
$\mathbf{1}_{N}$ indicates the $N\times 1$ vector with all entries being unity,
and $\mathcal{CN}(0,1)$ stands for complex Gaussian distribution with zero mean and unit variance.
The function $[x]^+=\mathrm{max}(x,0)$, while $\mathrm{exp}(\cdot)$ and $\mathrm{ln}(\cdot)$ denote the exponential function and the natural logarithm function, respectively.
The function $\lfloor x\rfloor=\mathrm{max}\{n\in\mathbb{Z}:n\leq x \}$.
Finally, $\mathbb{E}(\cdot)$ means the expectation of a random variable, $\mathbb{I}_{\mathcal{A}}(x)=1$ if $x\in\mathcal{A}$ and zero otherwise, and $\mathcal{O}(\cdot)$ means the order of arithmetic operations.
For ease of retrieval, important variables and parameters to be used in this paper are listed in Table~I.

\begin{table*}[!t]
\caption{Summary of Important Variables and Parameters}
\centering
\begin{tabular}{|l|l|l|}
\hline
\textbf{Symbol} & \textbf{Type} & \textbf{Description} \\
\hline
$p_{k}\in\mathbb{R}_+$  & Variable & Transmit power (in $\mathrm{Watt}$) at user $k$. \\
$v_{m}\in\mathbb{Z}_+$ & Variable & Number of training samples for task $m$.  \\
\hline
$P$ & Parameter & Total transmit power budget (in $\mathrm{Watt}$). \\
$B$ & Parameter & Communication bandwidth (in $\mathrm{Hz}$). \\
$T$ & Parameter & Transmission time (in $\mathrm{s}$). \\
$\xi$ & Parameter & The factor accounting for packet loss and network overhead. \\
$\sigma^2$ & Parameter & Noise power (in $\mathrm{Watt}$). \\
$D_m$ & Parameter & Data size (in $\mathrm{bit}$) per sample for task $m$ \\
$A_m$ & Parameter & Number of historical samples at the edge for task $m$. \\
$G_{k,l}$ & Parameter & The composite channel gain from user $l$ to the edge when decoding data of user $k$. \\
\hline
$\mathcal{D}_k,\mathcal{H}_k,\mathcal{T}_k,\mathcal{V}_k$ & Dataset & The full dataset, historical dataset, training dataset, validation dataset of user $k$. \\
\hline
$\Psi_m(v_m)$ & Function & Classification error of the learning model $m$ when the sample size is $v_m$. \\
$\Theta_m(v_m|a_m,b_m)$ & Function & Empirical classification error model for task $m$ with parameters $(a_m,b_m)$. \\
$U(\mathbf{d})$ & Function & Prediction confidence of sample $\mathbf{d}$. \\
\hline
\end{tabular}
\end{table*}

\section{System Model and Problem Formulation}

\setcounter{secnumdepth}{4}
We consider an edge machine learning system shown in Fig.~1, which consists of an intelligent edge with $N$ antennas and $K$ users with datasets $\{\mathcal{D}_1,\cdots,\mathcal{D}_K\}$.
The goal of the edge is to train $M$ classification models by collecting data from $M$ user groups (e.g., UAVs with camera sensors) $\{\mathcal{Y}_1,\mathcal{Y}_2,\cdots,\mathcal{Y}_M\}$, with the group $\mathcal{Y}_m$ containing all users having data for training the model $m$.
In case where some data from a particular user is used to train both model $m$ and model $j$, we can allow $\mathcal{Y}_m$ and $\mathcal{Y}_j$ to include a common user, i.e., $\mathcal{Y}_m\bigcap\mathcal{Y}_j\neq \emptyset$ for $m\neq j$.
For the classification models, without loss of generality, Fig.~1 depicts a convolutional neural network (CNN) and a support vector machine (SVM) (i.e., $M=2$), but more user groups and other classification models are equally valid.
It is assumed that the data are labeled at the edge.
This can be supplemented by the recent self-labeled techniques \cite{self,self2}, where a classifier is trained with an initial small number of labeled examples, and then the model is retrained with its own most confident predictions, thus enlarging its labeled training set.
After training the classifiers, the edge can feedback the trained models to users for subsequent use (e.g., object detection).
Notice that if the classifiers are pre-trained at the cloud and deployed at the edge, the task of edge machine learning is to fine-tune the pre-trained models at the edge, using local data generated from MTC users.

\begin{figure*}[!t]
 \centering
    \includegraphics[width=170mm]{./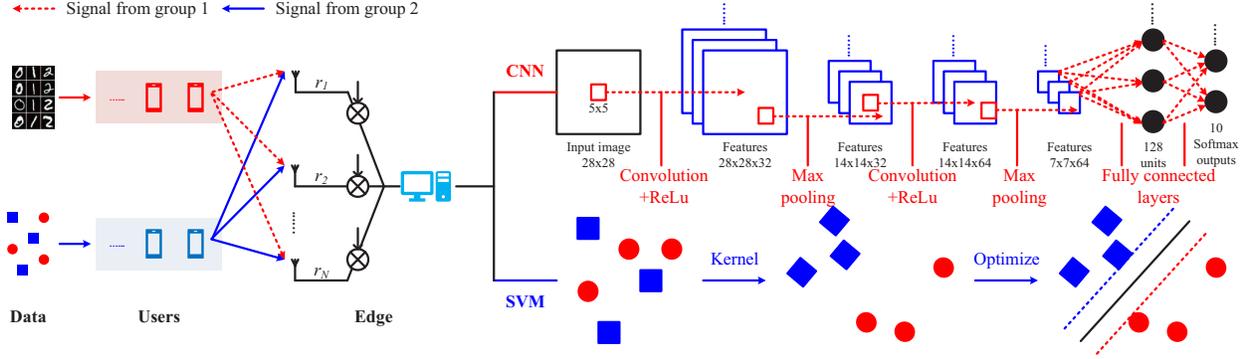}
  \caption{System model of machine intelligence  at the edge.
  }
  \label{fig1}
\end{figure*}

More specifically, the user $k\in\{1,\cdots,K\}$ transmits a signal $s_{k}$ with power $\mathbb{E}[|s_{k}|^2]=p_{k}$.
Accordingly, the received signal $\mathbf{r}=[r_1,\cdots,r_N]^T\in\mathbb{C}^{N\times 1}$ at the edge is
$\mathbf{r}=\sum_{k=1}^K\mathbf{h}_{k}\, s_{k}+\mathbf{n}$,
where $\mathbf{h}_{k}\in \mathbb{C}^{N\times 1}$ is the channel vector from the $k^{\mathrm{th}}$ user to the edge,
and $\mathbf{n}\sim \mathcal{CN}(\mathbf{0},\sigma^2\mathbf{I}_N)$.
By applying the maximal ratio combining (MRC) receiver $\mathbf{h}_{k}/\left\Vert\mathbf{h}_{k}\right\Vert_2$ to $\mathbf{r}$, the data-rate of user $k$ is
\begin{align}
&R_{k}=\mathrm{log}_2\left(1+\frac{G_{k,k}p_{k}}{\sum_{l=1,l\neq k}^KG_{k,l}p_{l}+
\sigma^2} \right), \label{Rk}
\end{align}
where $G_{k,l}$ represents the composite channel gain (including channel fading and MIMO processing) from user $l$ to the edge when decoding data of user $k$:
\begin{align}
&G_{k,l}=
\left\{
\begin{aligned}
&\left\Vert\mathbf{h}_{k}\right\Vert_2^2
,&\mathrm{if}~k=l
\\
&\frac{|\mathbf{h}_k^H\mathbf{h}_{l}|^2}{\left\Vert\mathbf{h}_{k}\right\Vert_2^2}
,&\mathrm{if}~k\neq l
\end{aligned}
\right.
.
\end{align}

With the expression of $R_k$ in \eqref{Rk}, the amount of data in $\mathrm{bit}$ received from user $k$ is $BTR_{k}$, where constant $B$ is the bandwidth in $\mathrm{Hz}$ that is assigned to the system (e.g., a standard MTC system would have $180~\mathrm{kHz}$ bandwidth \cite{iot3}), and $T$ is the total number of transmission time in second.
As a result, the total number of training samples that are collected at the edge for training the model $m$ is
\begin{align}\label{sample size}
&v_m= \xi\sum_{k\in\mathcal{Y}_m} \left\lfloor \frac{BTR_{k}}{D_m} \right\rfloor +A_m \approx \sum_{k\in\mathcal{Y}_m}\frac{\xi BTR_{k}}{D_m} +A_m,
\end{align}
where $A_m$ is the number of historical samples for task $m$ residing at the edge, $D_m$ is the number of bits for each data sample, and $\xi\leq1$ is a factor accounting for the reduced number of samples due to packet loss and network overhead.
The approximation is due to $\lfloor x\rfloor\to x$ when $x\gg 1$.

For example, in real-world edge/cloud machine learning applications, the data needs to be uploaded multiple times \cite{history1,history2} (e.g., twice a day).
If the historical dataset of user $k$ at the edge is $\mathcal{H}_k$, then $A_m=\sum_{k\in\mathcal{Y}_m}|\mathcal{H}_k|$.
For $D_m$, if we consider the MNIST dataset \cite{MNIST}, since the handwritten digit images are gray scale with $28\times 28$ pixels (each pixel has $8$ bits), in this case $D_m=8\times28\times28+4=6276~\mathrm{bits}$ ($4$ bits are reserved for the labels of $10$ classes \cite{MNIST} in case the users also transmit labels).
Lastly, if the system reserves $30\%$ of the resource blocks for network overhead \cite{PDCCH}, and the probability of packet error rate is $0.1$ \cite{xi2}, then $\xi$ can be estimated as $\xi=0.7\times 0.9=0.63$.

In the considered system, the design variables that can be controlled are the transmit powers of different users $\mathbf{p}=[p_1,\cdots,p_K]^T$ and the sample sizes of different models $\mathbf{v}=[v_1,\cdots,v_M]^T$.
Since the power costs at users should not exceed the total budget $P$, the variable $\mathbf{p}$ needs to satisfy
$\sum_{k=1}^Kp_k\leq P$.
Since a larger $\sum_{k=1}^Kp_k$ always improves the learning performance, we can rewrite $\sum_{k=1}^Kp_k\leq P$ as $\sum_{k=1}^Kp_k=P$.
Having the transmit power satisfied, it is then crucial to minimize the classification errors (i.e., the number of incorrect predictions divided by the number of total predictions),
which leads to the following learning centric power allocation (LCPA) problem:
\begin{subequations}
\begin{align}
\mathrm{P}:\mathop{\mathrm{min}}_{\substack{\mathbf{p},\,\mathbf{v}}}
\quad&\mathop{\mathrm{max}}_{m=1,\cdots,M}~\Psi_m(v_m),   \nonumber\\
\mathrm{s. t.}\quad &\sum_{k=1}^Kp_k=P,\quad p_k\geq 0,\quad k=1,\cdots,K,  \\
&\sum_{k\in\mathcal{Y}_m}\frac{\xi BT}{D_m}\mathrm{log}_2\left(1+\frac{G_{k,k}p_{k}}{\sum_{l=1,l\neq k}^KG_{k,l}p_{l}+
\sigma^2} \right)
\nonumber\\
&
+A_m
=v_m,\quad m=1,\cdots,M,
 \label{P0b}
\end{align}
\end{subequations}
where $\Psi_m(v_m)$ is the classification error of the learning model $m$ when the sample size is $v_m$, and the min-max operation at the objective function is to guarantee the worst-case learning performance.
The key challenge to solve $\rm{P}$ is that functions $(\Psi_1,\cdots,\Psi_M)$ represent generalization errors, and to the best of the authors' knowledge, currently there is no exact expression of $\Psi_m(v_m)$.
To address this issue, Section III will adopt an empirical classification error model to approximate $\Psi_m$.

The problem formulation $\mathrm{P}$ has assumed that the data is IID among different users (i.e., different users have identical distributions).
However, practical applications may involve non-IID data distributions.
For example, user $1$ has $1000$ samples but they are exactly the same (e.g., repeating the first sample in the MNIST dataset); and user $2$ has $50$ different samples (e.g., randomly drawn from the MNIST dataset).
In this case, upper bound and lower bound data-rate constraints should be imposed depending on the quality of users' data (i.e., whether their data improves the learning performance and how much the improvement could be):
\begin{itemize}
\item If the data transmitted from user $k$ does not help to improve the learning performance, a maximum amount of transmitted data $Z_k^{\rm{max}}$ should be imposed.
This is the case of user $1$, since the well-understood data should not be transmitted to the edge over and over again.

\item On the other hand, if the data transmitted from user $k$ helps to improve the learning performance, a minimum amount of data $Z_k^{\rm{min}}$ should be imposed on this user.
This is the case of user $2$, since more data from this user would reduce the learning error.

\end{itemize}
Notice that both the lower and upper bounds can be imposed on the same user simultaneously.
Therefore, we can add the following constraint to problem $\mathrm{P}$:
\begin{align}\label{ratebounds}
Z_k^{\rm{min}}&\leq\frac{\xi BT}{D_m}\mathrm{log}_2\left(1+\frac{G_{k,k}p_{k}}{\sum_{l=1,l\neq k}^KG_{k,l}p_{l}+
\sigma^2} \right)
\nonumber\\
&\leq Z_k^{\rm{max}},\quad k=1,\cdots,K.
\end{align}
Of course, the remaining question is how to determine whether the data from a particular user is useful or not.
To this end, the uncertainty sampling method \cite{uncertainty} can be adopted.
In particular, let the function $\mathbb{P}(\mathbf{c}|\mathbf{d}, \mathbf{w})$ denote the probability of the label being $\mathbf{c}$ given sample $\mathbf{d}\in\mathbb{R}^{S\times 1}$ and the model parameter vector $\mathbf{w}$ (e.g., $\mathbf{c}$ is a one-hot vector containing $10$ elements if the learning problem is a $10$-classes classification; for the MNIST dataset, we have $S=784$; and in the considered CNN model, $\mathbf{w}$ contains the convolution matrices and the bias vectors).
The confidence of the predicted label $\mathbf{c}^*=\mathop{\mathrm{argmax}}_{\mathbf{c}}\mathbb{P}(\mathbf{c}|\mathbf{d},\mathbf{w})$ is
\begin{align}\label{uncertain}
&U(\mathbf{d})=\mathbb{P}(\mathbf{c}^*|\mathbf{d},\mathbf{w}).
\end{align}

Therefore, we can compute the $U(\mathbf{d})$ for all the historical samples from user $k$.
If the average (or median, minimum) value of $U(\mathbf{d})$ is large (e.g., $>0.9$) for the historical data from user $k$, then user $k$ should have its data-rate upper bounded.
On the other hand, if the average (or median, minimum) value of $U(\mathbf{d})$ is small (e.g., $<0.5$), then user $k$ should have its data-rate lower bounded.
The entire LCPA scheme with uncertainty sampling is summarized in Fig.~2.

\begin{figure}[!t]
 \centering
    \includegraphics[width=75mm]{./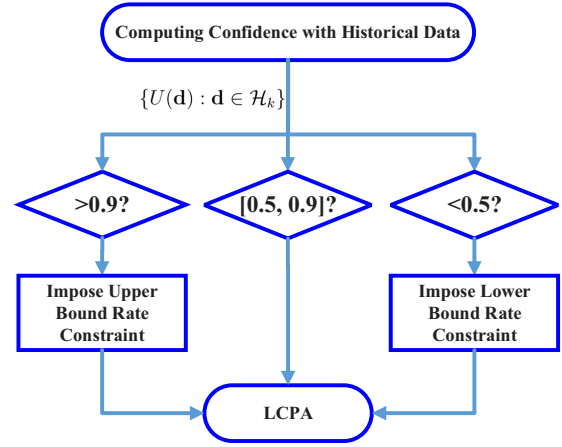}
  \caption{Illustration of the LCPA scheme with uncertainty sampling.
  }
  \label{fig2}
\end{figure}

\emph{Remark 1 (Practical Power Control Procedure):
The LCPA problem will be solved at the edge, which can then inform the users about their transmit powers through downlink control channels.
For example, the 3GPP standard reserves some resource blocks for Physical Downlink Control Channel (PDCCH) \cite{PDCCH}, which are used for sending control signals such as users' transmit powers and modulation orders.
}

\section{Modeling of Classification Error}

\subsection{Classification Error Rate Function}

In general, the classification error $\Psi_m(v_m)$ is a nonlinear function of $v_m$ \cite{model1,model2,model3,model4,model5,model6}.
Particularly, this nonlinear function should satisfy the following properties:
\begin{itemize}
\item[(i)] As $\Psi_m$ is a percentage, it satisfies $0\leq \Psi_m(v_m) \leq 1$;

\item[(ii)] Since more data would provide more information, $\Psi_m(v_m)$ is a monotonically decreasing function of $v_m$ \cite{model1};

\item[(iii)] As $v_m$ increases, the magnitude of the partial derivative $|\partial \Psi_m/\partial v_m|$ would gradually decrease and become zero when $v_m$ is sufficiently large \cite{model2}, meaning that increasing sample size no longer helps machine learning.

\end{itemize}
Based on the properties (i)--(iii), the following nonlinear model $\Theta_m(v_m|a_m,b_m)$ \cite{model1,model2,model5} can be used to capture the shape of $\Psi_m(v_m)$:
\begin{align}
\Psi_m(v_m) \approx
\Theta_m(v_m|a_m,b_m)
=a_m\, v_m^{-b_m}, \label{model1}
\end{align}
where $a_m,b_m\geq 0$ are tuning parameters.
It can be seen that $\Theta_m$ satisfies all the above properties.
Moreover, $\Theta_m(v_m|a_m,b_m)\to 0$ if $v_m\to +\infty$, meaning that the error is $0$ with infinite data.\footnote{We assume the model is powerful enough such that given infinite amount of data, the error rate can be driven to zero.}

\textbf{Interpretation from Learning Theory.}
Apart from (i)--(iii), the model \eqref{model1} corroborates the \emph{inverse power relationship} between learning performance $\Psi_m$ and the amount of training data $v_m$ from the perspective of statistical mechanics \cite{model6}.
In particular, according to \cite{model6}, the training procedure can be modeled as a Gibbs distribution of networks characterized by a temperature parameter $T_g$.
The asymptotic generalization error as the number of samples $v_m$ goes to infinity is expressed as \cite[Eq. (3.12)]{model6}
\begin{align}\label{RQ2_A2}
\Psi_m(v_m)&\rightarrow\epsilon_{\mathrm{min}}+\left(\frac{T_g}{2}+\frac{\mathrm{Tr}(\mathbf{U}_m\mathbf{V}_m^{-1})}{2W_m}\right)W_mv_m^{-1},
\nonumber\\
\mathrm{if}~v_m&\rightarrow +\infty,
\end{align}
where $\epsilon_{\mathrm{min}}\geq 0$ is the minimum error for all possible learning models and $W_m$ is the number of parameters.
The matrices $(\mathbf{U}_m,\mathbf{V}_m)$ contain the second-order and first-order derivatives of the generalization error function with respect to the parameters of model $m$.
By comparing \eqref{RQ2_A2} with \eqref{model1}, we can see that $a_m=\left(\frac{T_g}{2}+\frac{\mathrm{Tr}(\mathbf{U}_m\mathbf{V}_m^{-1})}{2W_m}\right)W_m$ and $b_m=-1$ in the asymptotic case.
Therefore, as the number of samples goes to infinity, the weighting factor $a_m$ in \eqref{model1} accounts for the \emph{model complexity} of the classifier $m$.
Moreover, in the finite sample size regime, the slopes of learning curves may be different for different datasets even for the same machine learning model.
This means that $v_m^{-1}$ is not always suitable in practice.
Therefore, $b_m$ is introduced as a tuning parameter accounting for the correlation in a dataset.

\subsection{Parameter Fitting of CNN and SVM Classifiers}

We use the public MNIST dataset \cite{MNIST} as the input images, and train the $6$-layer CCN (shown in Fig.~1) with training sample size $v_m^{(i)}$ ranging from $100$ to $10000$.
In particular, the input image is sequentially fed into a $5\times 5$ convolution layer (with ReLu activation, 32 channels, and SAME padding), a $2\times 2$ max pooling layer, then another $5\times 5$ convolution layer (with ReLu activation, 64 channels, and SAME padding), a $2\times 2$ max pooling layer, a fully connected layer with $128$ units (with ReLu activation), and a final softmax output layer (with $10$ ouputs).
The training procedure is implemented via Adam optimizer with a learning rate of $10^{-4}$ and a mini-batch size of $100$.
After training for $5000$ iterations, we test the trained model on a dataset with $1000$ unseen samples, and compute the corresponding classification error.
By varying the sample size $v_m$ as $(v_m^{(1)},v_m^{(2)},\cdots)=(100,150,200,300,500,1000,5000,10000)$, we can obtain the classification error $\Psi_m(v_m^{(i)})$ for each sample size $v_m^{(i)}$, where $i=1,\cdots,Q$, and $Q=8$ is the number of points to be fitted.
With $\{v_m^{(i)},\Psi_m(v_m^{(i)})\}_{i=1}^Q$, the parameters $(a_m,b_m)$ in $\Theta_m$ can be found via the following nonlinear least squares fitting:
\begin{align}
\mathop{\mathrm{min}}_{a_m,\,b_m}\quad&\frac{1}{Q}\mathop{\sum}_{i=1}^Q\Big|\Psi_m\left(v_m^{(i)}\right)-\Theta_m\left(v_m^{(i)}|a_m,b_m\right)\Big|^2,
\nonumber\\
\mathrm{s.t.}\quad
&a_m\geq 0,\quad b_m\geq 0.
\label{fitting}
\end{align}
The above problem can be solved by two-dimensional brute-force search, or gradient descent method.
Since the parameters $(a_m,b_m)$ for different tasks are obtained independently, the total complexity is linear in terms of the number of tasks.
The fitted classification error versus the sample size is shown in Fig.~3a.
It is observed from Fig.~3a that with the parameters $(a_m,b_m)=(9.27,0.74)$, the nonlinear classification error model in \eqref{model1} matches the experimental data of CNN very well.

To demonstrate the versatility of the model, we also fit the nonlinear model to the classification error of a support vector machine (SVM) classifier.
The SVM uses penalty coefficient $C=1$ and Gaussian kernel function $K(\mathbf{x}_i,\mathbf{x}_j)=\mathrm{exp}\left(-\widetilde{\gamma}\, \left\Vert\mathbf{x}_i-\mathbf{x}_j\right\Vert_2^2\right)$ with $\widetilde{\gamma}=0.001$ \cite{sklearn}.
Moreover, the SVM classifier is trained on the digits dataset in the Scikit-learn Python machine learning tookbox, and the dataset contains $1797$ images of size $8\times 8$ from $10$ classes, with $5$ bits (corresponding to integers $0$ to $16$) for each pixel \cite{sklearn}.
Therefore, each image needs $D_m=8\times 8\times 5+4=324~\mathrm{bits}$.
Out of all images, we train the SVM using the first $1000$ samples with sample size $(v_m^{(1)},v_m^{(2)},\cdots)=(30,50,100,200,300,400,500,1000)$,
and use the latter $797$ samples for testing.
The parameters $(a_m,b_m)$ for the SVM are obtained following a similar procedure in \eqref{fitting}.
It is observed from Fig.~3a that with $(a_m,b_m)=(6.94,0.8)$, the model in \eqref{model1} fits the experimental data of SVM.

\begin{figure*}
 \centering
  \subfigure[]{
    \includegraphics[width=58mm]{./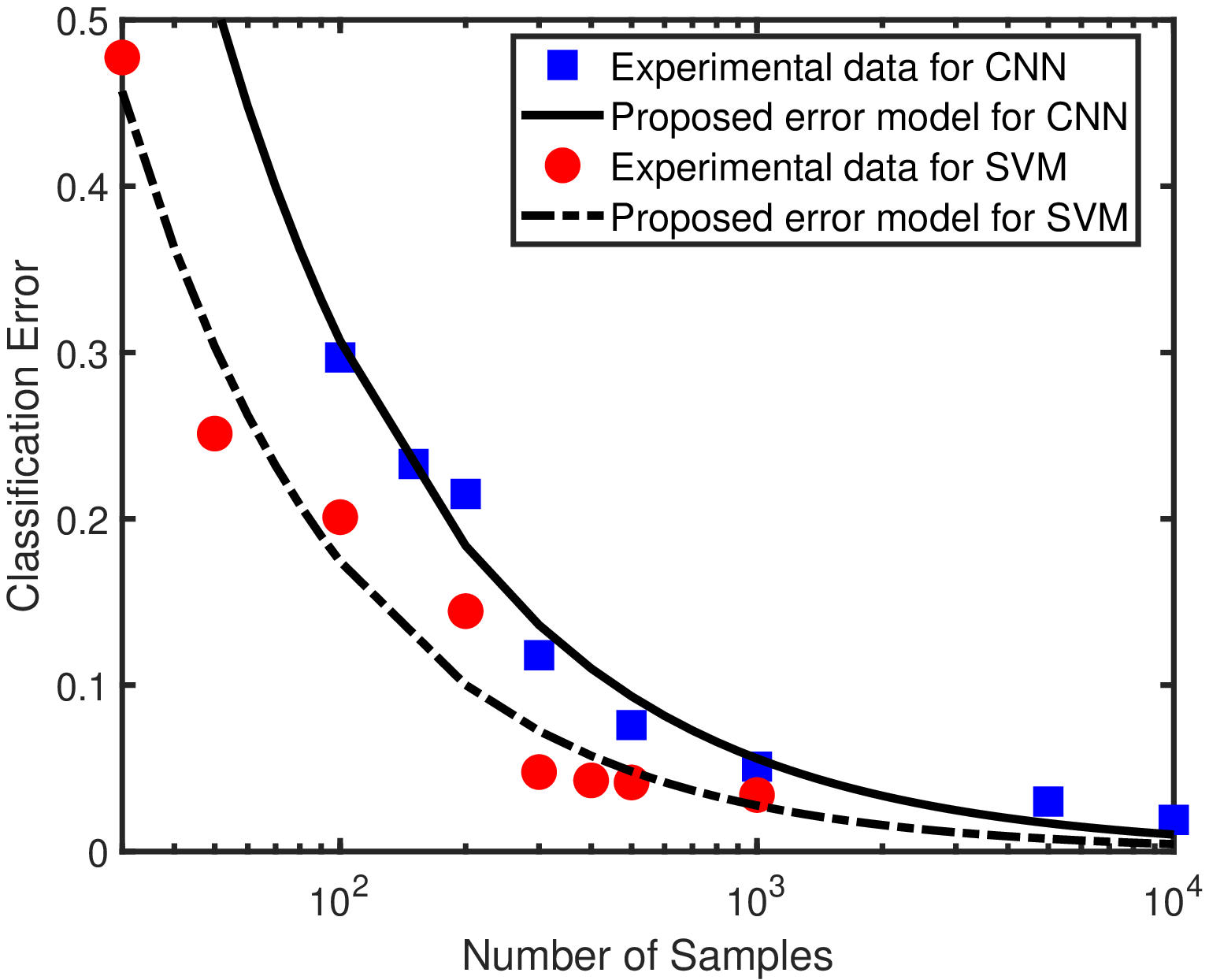}}
  \subfigure[]{
    \includegraphics[width=58mm]{./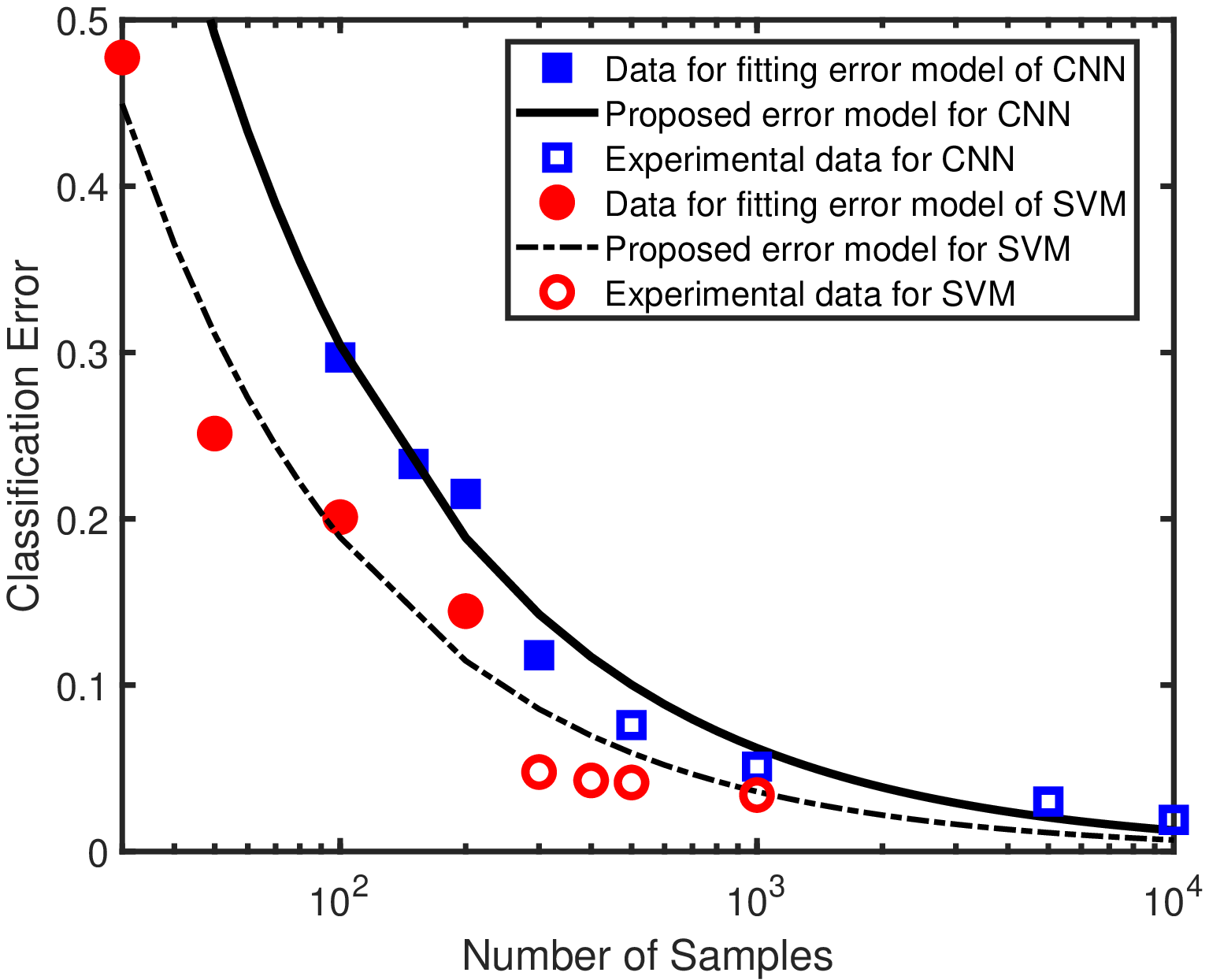}}
  \subfigure[]{
    \includegraphics[width=58mm]{./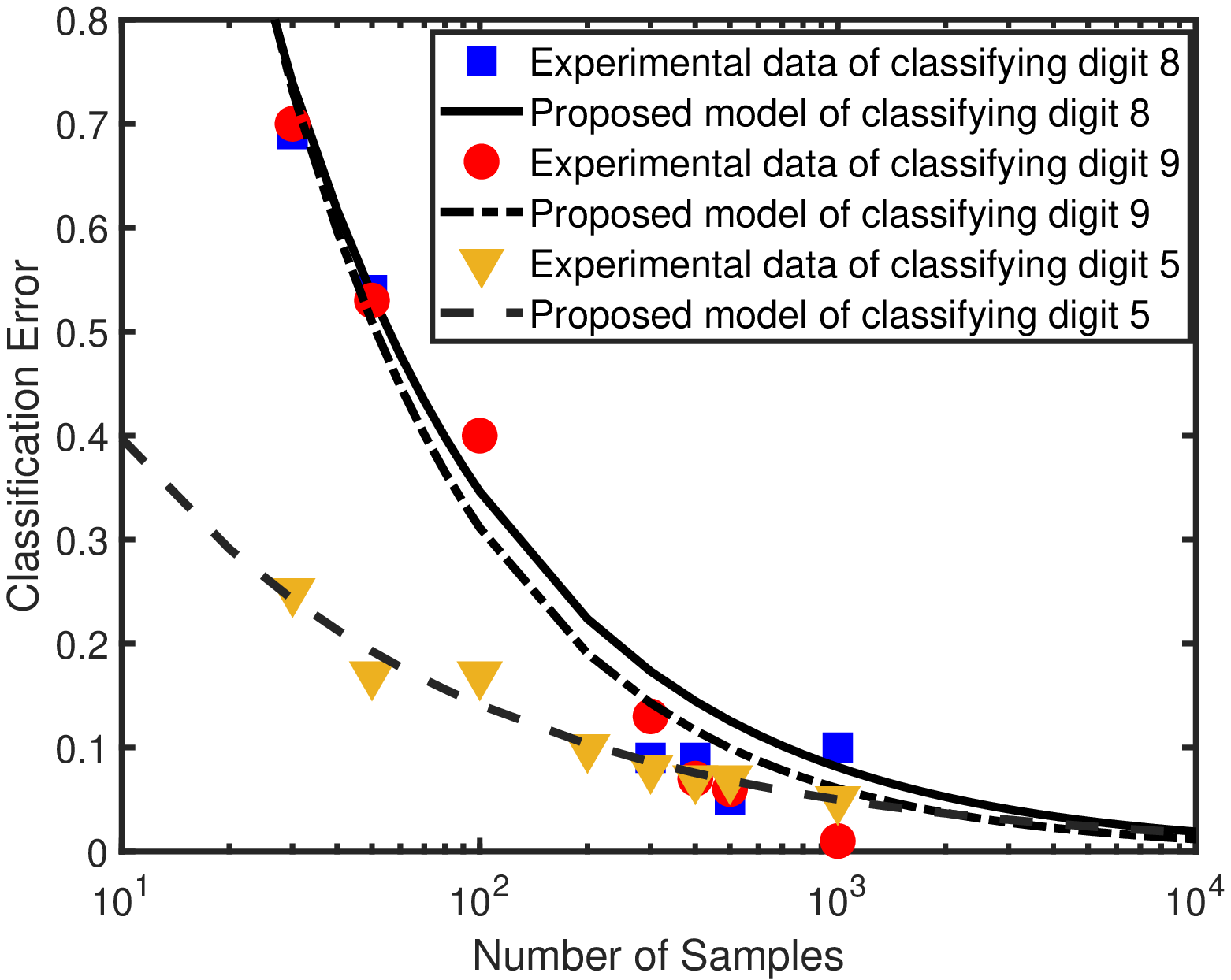}}
  \caption{
  a) Comparison between the experimental data and the nonlinear classification error model. The parameters in the models are given by
$(a_m,b_m)=(9.27,0.74)$ for CNN and $(a_m,b_m)=(6.94,0.8)$ for SVM;
  b) Fitting the error function to historical datasets. The parameters in the models are given by $(a_m,b_m)=(7.3,0.69)$ for CNN and $(a_m,b_m)=(5.2,0.72)$ for SVM;
  c) Comparison between different classification tasks.}
\end{figure*}

\subsection{Practical Implementation}

One may wonder how could one obtain the fitted classification error model before the actual machine learning model is being trained.
There are two ways to address this issue.

\textbf{1) Extrapolation.} More specifically, the error function can be obtained by training the machine learning model on the historical dataset at the edge, and the performance on a future larger dataset can be predicted.
This is called extrapolation \cite{model1}.
For example, by fitting the error function to the first half experimental data of CNN in Fig.~3b (i.e., $v_m=(100,150,200,300)$), we can obtain $(a_m,b_m)=(7.3,0.69)$, and the resultant curve predicts the errors at $v_m=(500,1000,5000,10000)$ very well as shown in Fig.~3b.
Similarly, with $(a_m,b_m)=(5.2,0.72)$ obtained from the experimental data of $v_m=(30,50,100,200)$, the proposed model for SVM matches the classification errors at $v_m=(300,400,500,1000)$.
It can be seen that the fitting performance in Fig.~3b is slightly worse than that in Fig. 3a, as we use smaller number of pilot data.
But since our goal is to distinguish different tasks rather than accurate prediction of the classification errors, the extrapolation method can guide the resource allocation at the edge.

\textbf{2) Approximation.} This means that we can pre-train a large number of commonly-used models offline (not at the edge) and store their corresponding parameters of $(a_m,b_m)$ in a look-up table at the edge.
Then by choosing a set of parameters from the table, the unknown error model at the edge can be approximated.
This is because the error functions can share the same trend for two similar tasks, e.g., classifying digit `$8$' and `$9$' with SVM as shown in Fig.~3c.
Notice that there may be a mismatch between the pre-training task and the real task at the edge.
This is the case between classifying digit `$8$' and `$5$' in Fig.~3c.
As a result, it is necessary to carefully measure the similarity between two tasks when choosing the parameters.

\subsection{Parameter Fitting of ResNet and PointNet}

To verify the nonlinear model in \eqref{model1} under deeper learning models and larger datasets,
we train the $110$-layer deep residual network (ResNet-$110$ with $1.7~$M parameters) \cite{resnet} using the CIFAR-10 dataset as the input images, with training sample size ranging from $5000$ to $50000$.
The image in the CIFAR-10 dataset has $32\times32$ pixels (each pixel has $3~$Bytes representing RGB), and each image sample has a size of $(32\times32\times3+1)\times8=24584~\mathrm{bits}$.
The training procedure is implemented with a diminishing learning rate and a mini-batch size of $100$.
After training for $50000$ iterations ($\sim2.5$ hours), we test the trained model on a dataset with $10000$ unseen samples, and obtain the corresponding classification error.
It can be seen from Fig.~4a that the proposed model with $(a_m,b_m)=(8.15,0.44)$ matches the experimental data of ResNet-$110$ very well.
Moreover, we also consider the PointNet ($3.5~$M parameters), which applies feature transformations and aggregates point features by max pooling \cite[Fig.~2]{pointnet} to classify $3$D point clouds dataset ModelNet40 (see examples in Fig.~4b).
In ModelNet40, there are $12311$ CAD models from $40$ object categories, split into $9843$ for training and $2468$ for testing.
Each sample has $2000$ points with three single-precision floating-point coordinates ($4~$Bytes), and the data size per sample is $(2000\times3\times4+1)\times8=192008~\mathrm{bits}$.
After training for $250$ epochs ($\sim5.5$ hours) with a mini-batch of $32$, the classification error versus the number of samples is obtained in Fig.~4a, and the proposed classification error model with $(a_m,b_m)=(0.96,0.24)$ matches the experimental data of PointNet very well.

\begin{figure*}[!t]
 \centering
          \hspace{0.1in}
      \subfigure[]{
    \includegraphics[width=58mm]{./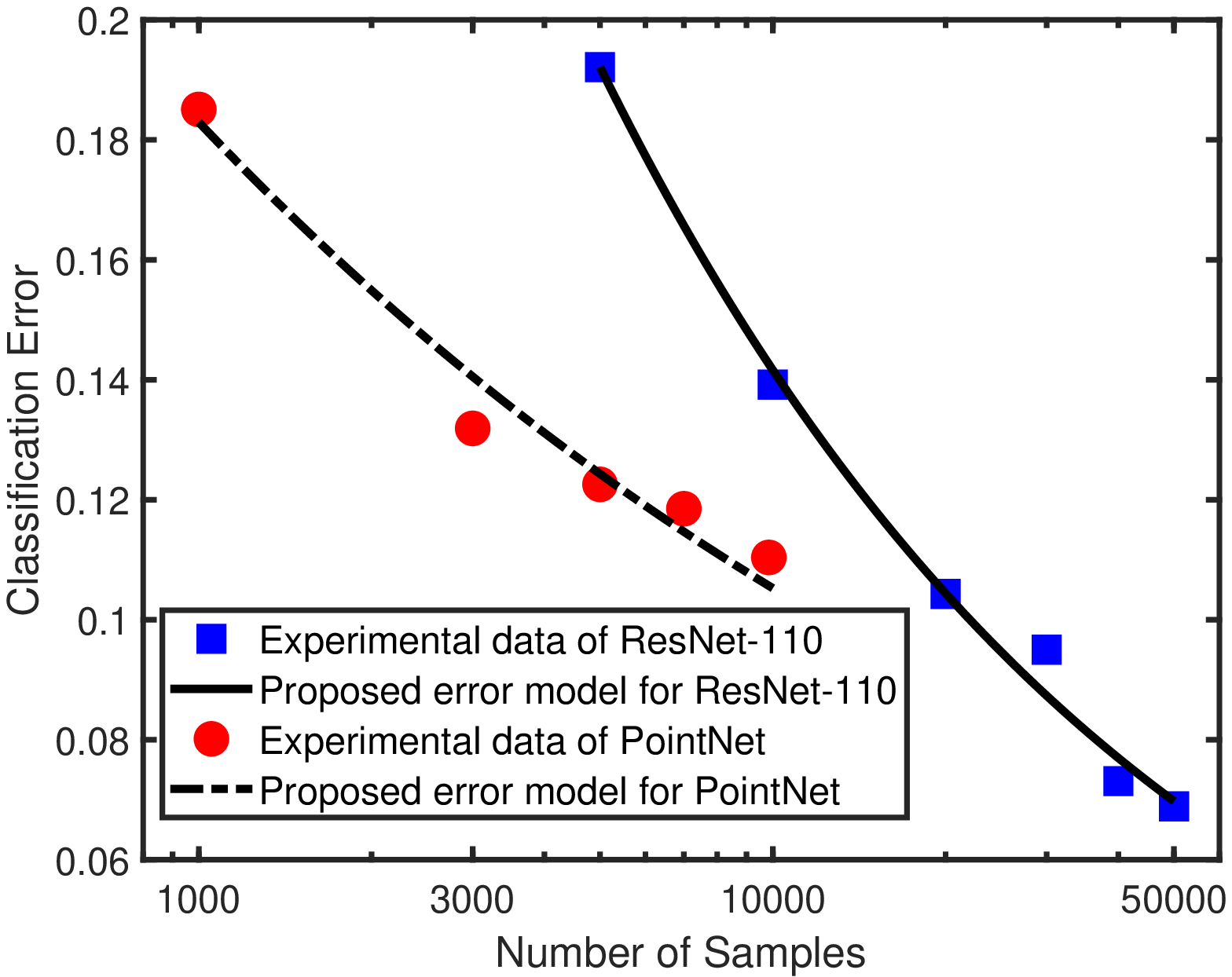}}
          \subfigure[]{
    \includegraphics[width=50mm]{./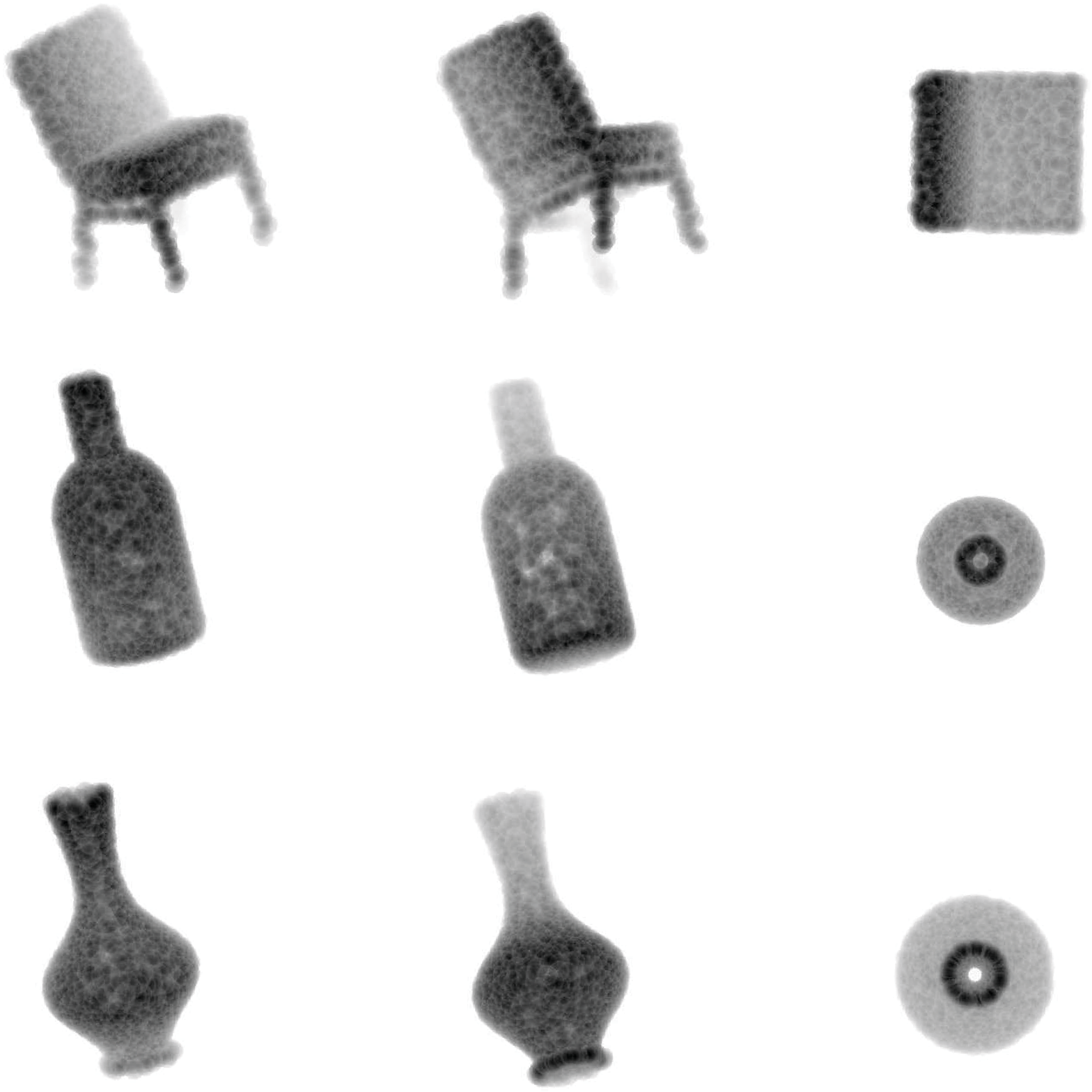}}
              \subfigure[]{
    \includegraphics[width=58mm]{./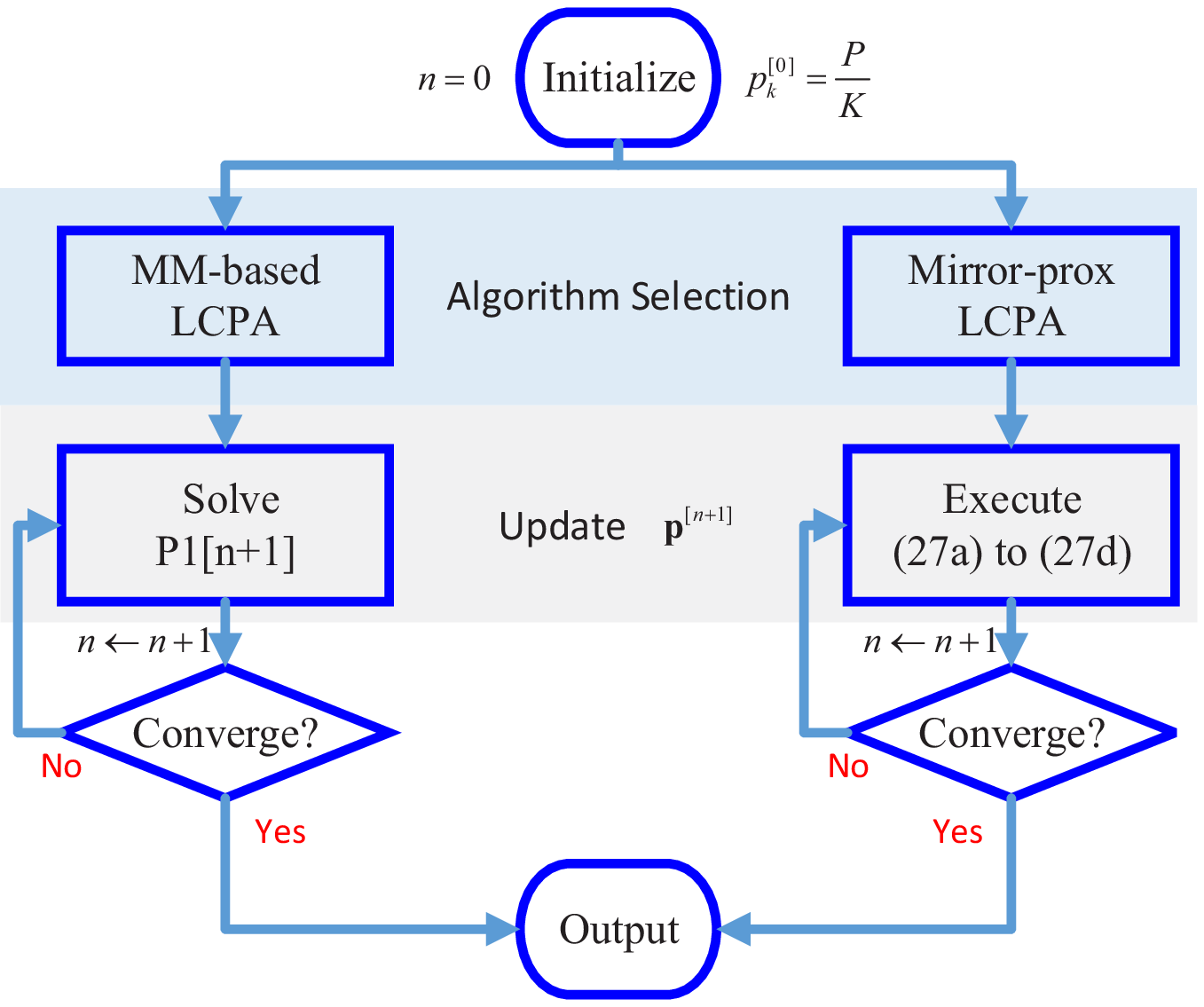}}
  \caption{a) Comparison between the experimental data and the nonlinear classification error model. The parameters in the models are given by
$(a_m,b_m)=(8.15,0.44)$ for ResNet-$110$ and $(a_m,b_m)=(0.96,0.24)$ for PointNet; b) Examples of the 3D point clouds in ModelNet40; c) The flowchart for LCPA Algorithms.}
\end{figure*}

\section{MM-Based LCPA Algorithm}

Based on the results in Section III, we can directly approximate the true error function $\Psi_m$ by $\Theta_m$.
However, to account for the approximation error between $\Psi_m$ and $\Theta_m$ (e.g., due to noise in samples or slight mismatch between data used for fitting and data observed in MTC devices), a weighting factor $\beta_m\geq 1$ can be applied to $\Theta_m$,
where a higher value of $\beta_m$ accounts for a larger approximation error.\footnote{Since the real classification error is scattered around the fitted one, introducing $\rho_m\geq1$ means that we need to elevate the fitted curves to account for the possibilities that worse classification results may happen compared to our prediction.
In other words, we should be more conservative (pessimistic) about our prediction.}
Then by replacing $\Psi_m$ with $\beta_m\Theta_m$ and putting \eqref{P0b} into $\Theta_m(v_m|a_m,b_m)$ to eliminate $\mathbf{v}$, problem $\mathrm{P}$ becomes:
\begin{subequations}
\begin{align}
\mathrm{P}1:\mathop{\mathrm{min}}_{\substack{\mathbf{p}}}
\quad&\mathop{\mathrm{max}}_{m=1,\cdots,M}~\beta_m\,\Phi_m(\mathbf{p}),   \nonumber\\
\mathrm{s. t.}\quad&\sum_{k=1}^Kp_k=P,\quad p_k\geq 0,\quad \forall k,
 \label{P1a} \\
&G_{k,k}p_{k}\geq\left(2^{D_mZ_k^{\rm{min}}/(\xi BT)}-1\right)
\nonumber\\
&
\times\left(\sum_{l\neq k}G_{k,l}p_{l}+\sigma^2\right),\quad \forall k,
 \label{P1b}
\\
&G_{k,k}p_{k}\leq\left(2^{D_mZ_k^{\rm{max}}/(\xi BT)}-1\right)
\nonumber\\
&\times\left(\sum_{l\neq k}G_{k,l}p_{l}+\sigma^2\right),\quad \forall k,
 \label{P1c}
\end{align}
\end{subequations}
where constraints \eqref{P1b}--\eqref{P1c} come from equation \eqref{ratebounds} in the non-IID case and
\begin{align}
&\Phi_m(\mathbf{p}):=
\nonumber\\
&a_m\left[
\sum_{k\in\mathcal{Y}_m} \frac{\xi BT}{D_m}\mathrm{log}_2\left(1+\frac{G_{k,k}p_{k}}{\sum_{l\neq k}G_{k,l}p_{l}+
\sigma^2} \right)
+A_m
\right]^{-b_m}.
\end{align}
It can be seen that $\mathrm{P}1$ is a nonlinear optimization problem due to the nonlinear classification error model \eqref{model1}.
Moreover, the $\mathrm{max}$ operator introduces non-smoothness to the problem, and the objective function is not differentiable.
Thus the existing method based on gradient descent \cite{gradient} is not applicable.

To solve $\mathrm{P}1$, we propose to use the framework of MM \cite{mm1,mm2,mm3,wang}, which constructs a sequence of upper bounds $\{\widetilde{\Phi}_m\}$ on $\{\Phi_m\}$ and replaces $\{\Phi_m\}$ in $\rm{P}1$ with $\{\widetilde{\Phi}_m\}$ to obtain the surrogate problems.
More specifically, given any feasible solution $\mathbf{p}^\star$ to $\mathrm{P}1$, we define surrogate functions
\begin{align}
&\widetilde{\Phi}_{m}(\mathbf{p}|\mathbf{p}^\star)
\nonumber\\
&=a_m\Bigg\{
\sum_{k\in\mathcal{Y}_m} \frac{\xi BT}{D_m\mathrm{ln}2\,}
\Bigg[
\mathrm{ln}\left(\sum_{l=1}^K\frac{G_{k,l}p_{l}}{\sigma^2}+1\right)
\nonumber\\
&\quad
{}
-
\mathrm{ln}\left(\sum_{l=1,l\neq k}^K\frac{G_{k,l}p^\star_{l}}{\sigma^2}+1\right)-\left(\sum_{l=1,l\neq k}^K\frac{G_{k,l}p^\star_{l}}{\sigma^2}+1\right)^{-1}
\nonumber
\\&\quad
{}
\times\left(\sum_{l=1,l\neq k}^K\frac{G_{k,l}p_{l}}{\sigma^2}+1\right)
+1
\Bigg]+A_m
\Bigg\}^{-b_m}, \label{Phi}
\end{align}
and the following proposition can be established.
\begin{proposition}
The functions $\{\widetilde{\Phi}_m\}$ satisfy the following conditions:

\noindent(i) Upper bound condition: $\widetilde{\Phi}_{m}(\mathbf{p}|\mathbf{p}^\star)\geq \Phi_{m}(\mathbf{p})$.

\noindent(ii) Convexity: $\widetilde{\Phi}_m(\mathbf{p}|\mathbf{p}^{\star})$ is convex in $\bm{\mathbf{p}}$.

\noindent(iii) Local equality condition: $\widetilde{\Phi}_{m}(\mathbf{p}^\star|\mathbf{p}^\star)=\Phi_{m}(\mathbf{p}^\star)$ and $\nabla_{\mathbf{p}}\widetilde{\Phi}_{m}(\mathbf{p}^\star|\mathbf{p}^\star)=\nabla_{\mathbf{p}}\Phi_{m}(\mathbf{p}^\star)$.

\end{proposition}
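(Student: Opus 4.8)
The plan is to obtain all three properties from a single structural observation about the surrogate. Writing the MRC rate \eqref{Rk} in difference-of-logarithms form,
\begin{align}
R_k(\mathbf{p})=\frac{1}{\ln 2}\left[\ln\Big(\sum_{l=1}^K\frac{G_{k,l}p_l}{\sigma^2}+1\Big)-\ln\Big(\sum_{l\neq k}\frac{G_{k,l}p_l}{\sigma^2}+1\Big)\right],\nonumber
\end{align}
we have $\Phi_m(\mathbf{p})=a_m[\psi_m(\mathbf{p})]^{-b_m}$ with $\psi_m(\mathbf{p})=\sum_{k\in\mathcal{Y}_m}\frac{\xi BT}{D_m}R_k(\mathbf{p})+A_m$, and the surrogate \eqref{Phi} is precisely $\widetilde{\Phi}_m(\mathbf{p}|\mathbf{p}^\star)=a_m[\phi_m(\mathbf{p}|\mathbf{p}^\star)]^{-b_m}$, where $\phi_m$ is obtained from $\psi_m$ by replacing, inside each $R_k$, the concave term $-\ln(\sum_{l\neq k}\frac{G_{k,l}p_l}{\sigma^2}+1)$ by its first-order Taylor expansion at $\mathbf{p}^\star$ while keeping the concave term $\ln(\sum_{l=1}^K\frac{G_{k,l}p_l}{\sigma^2}+1)$ untouched. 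Splitting the Taylor increment as $B/B^\star-1$ with $B=\sum_{l\neq k}\frac{G_{k,l}p_l}{\sigma^2}+1$ reproduces verbatim the bracketed expression in \eqref{Phi}, so this identification is the starting point.

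For the upper bound (i), I would invoke the tangent-line inequality for the concave logarithm, $\ln x\le\ln x^\star+(x-x^\star)/x^\star$ for all $x,x^\star>0$, applied with $x=\sum_{l\neq k}\frac{G_{k,l}p_l}{\sigma^2}+1$; this yields $\widetilde{R}_k(\mathbf{p}|\mathbf{p}^\star)\le R_k(\mathbf{p})$ termwise and hence $\phi_m(\mathbf{p}|\mathbf{p}^\star)\le\psi_m(\mathbf{p})$. Since $a_m\ge0$ and $b_m\ge0$, the scalar map $x\mapsto a_m x^{-b_m}$ is non-increasing on $(0,\infty)$, so the inequality flips under the composition and gives $\widetilde{\Phi}_m(\mathbf{p}|\mathbf{p}^\star)\ge\Phi_m(\mathbf{p})$. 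The point that needs care is the bookkeeping of directions: over-estimating the interference logarithm under-estimates the rate, which, because $x^{-b_m}$ is decreasing, over-estimates the error model; this double reversal is where a sign could slip.

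For the local equality (iii), the first-order Taylor expansion agrees with the expanded function in both value and gradient at the expansion point, so $\widetilde{R}_k(\mathbf{p}^\star|\mathbf{p}^\star)=R_k(\mathbf{p}^\star)$ and $\nabla_{\mathbf{p}}\widetilde{R}_k(\mathbf{p}^\star|\mathbf{p}^\star)=\nabla_{\mathbf{p}}R_k(\mathbf{p}^\star)$; summing over $k\in\mathcal{Y}_m$ gives $\phi_m(\mathbf{p}^\star|\mathbf{p}^\star)=\psi_m(\mathbf{p}^\star)$ and $\nabla_{\mathbf{p}}\phi_m(\mathbf{p}^\star|\mathbf{p}^\star)=\nabla_{\mathbf{p}}\psi_m(\mathbf{p}^\star)$, and the chain rule applied to the common outer map $x\mapsto a_m x^{-b_m}$ delivers the two identities in (iii) at once.

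I expect the convexity claim (ii) to be the main obstacle, in the sense that it needs the right reformulation rather than hard computation. The argument I have in mind: inside \eqref{Phi} the only nonlinear dependence on $\mathbf{p}$ is through $\ln(\sum_{l=1}^K\frac{G_{k,l}p_l}{\sigma^2}+1)$, which is concave (a concave function composed with an affine map), whereas the Taylor-replaced term is affine in $\mathbf{p}$; hence each summand of the bracket is concave, and with the positive weights $\frac{\xi BT}{D_m\ln 2}$ and the additive constant $A_m$ the inner function $\phi_m(\cdot|\mathbf{p}^\star)$ is concave. The outer function $h(x)=a_m x^{-b_m}$ satisfies $h'(x)=-a_m b_m x^{-b_m-1}\le0$ and $h''(x)=a_m b_m(b_m+1)x^{-b_m-2}\ge0$ on $(0,\infty)$, that is, it is convex and non-increasing. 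The composition rule ``a convex non-increasing function of a concave function is convex'' then gives convexity of $\widetilde{\Phi}_m=h\circ\phi_m$. The one residual subtlety is that $h$ requires $\phi_m>0$; I would handle this by noting that the surrogate problem is solved over the feasible set of $\mathrm{P}1$ intersected with the region where the bracket stays positive, which contains $\mathbf{p}^\star$ itself (there $\phi_m=\psi_m\ge A_m$), so the composition rule applies without qualification on the relevant domain.
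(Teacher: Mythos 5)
Your proof is correct and follows essentially the same route as the paper's: the tangent-line inequality for the concave logarithm (equivalently, the convexity of $-\ln$) gives the upper bound after the sign flip through the non-increasing map $x\mapsto a_m x^{-b_m}$, the convex-non-increasing-of-concave composition rule gives convexity, and the first-order agreement of the Taylor linearization gives the local value and gradient matching. Your framing of the surrogate as a Taylor linearization of the interference term, and your remark on the positivity of the inner bracket, are slightly more explicit than the paper's Appendix A but do not constitute a different argument.
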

\begin{proof}
See Appendix A.
\end{proof}

With part (i) of \textbf{Proposition 1}, an upper bound can be directly obtained if we replace the functions $\{\Phi_m\}$ by $\{\widetilde{\Phi}_m\}$ around a feasible point.
However, a tighter upper bound can be achieved if we treat the obtained solution as another feasible point and continue to construct the next-round surrogate function.
In particular, assuming that the solution at the $n^{\mathrm{th}}$ iteration is given by $\mathbf{p}^{[n]}$, the following problem is considered at the $(n+1)^{\mathrm{th}}$ iteration:
\begin{align}
\mathrm{P}1[n+1]:\mathop{\mathrm{min}}_{\substack{\mathbf{p}}}
\quad&\mathop{\mathrm{max}}_{m=1,\cdots,M}~\beta_m\widetilde{\Phi}_m(\mathbf{p}|\mathbf{p}^{[n]}),
\nonumber\\
\mathrm{s. t.}\quad&\rm{constraints}~\eqref{P1a}-\eqref{P1c}. \label{P1[n+1]}
\end{align}

Based on part (ii) of \textbf{Proposition 1}, the problem $\mathrm{P}1[n+1]$ is convex and can be solved by off-the-shelf software packages (e.g., CVX Mosek \cite{opt1}) for convex programming.
Denoting its optimal solution as
$\mathbf{p}^*$, we can set
$\mathbf{p}^{[n+1]}=\mathbf{p}^*$, and the process repeats with solving the problem $\mathrm{P}1[n+2]$.
According to part (iii) of \textbf{Proposition 1} and \cite[Theorem 1]{mm1}, every limit point of the sequence
$(\mathbf{p}^{[0]},\mathbf{p}^{[1]},\cdots)$ is a KKT solution to $\mathrm{P}1$ as long as the starting point $\mathbf{p}^{[0]}$ is feasible to $\mathrm{P}1$ (e.g., $\mathbf{p}^{[0]}=P/K\,\bm{1}_K$).
The entire procedure of the MM-based LCPA is summarized in the left hand branch of Fig.~4c.

In terms of computational complexity, $\mathrm{P}1[n+1]$ involves $K$ primal variables and $M+3K+1$ dual variables.
The dual variables correspond to $M+3K+1$ constraints in $\mathrm{P}1[n+1]$, where $M$ constraints come from the $\mathrm{max}$ operator, $K$ constraints come from nonnegative power constraints, $2K$ constraints come from the data-rate bounds, and one constraint comes from the power budget.
Therefore, the worst-case complexity for solving $\mathrm{P}1[n+1]$ is
$\mathcal{O}\Big((M+4K+1)^{3.5}\Big)$ \cite{opt2}.
Consequently, the total complexity for solving $\mathrm{P}1$ is $\mathcal{O}\Big(\mathcal{I}\,(M+4K+1)^{3.5}\Big)$, where $\mathcal{I}$ is the number of iterations needed for the algorithm to converge.

\section{Asymptotic Analysis and Insights to LCPA}

To understand how LCPA works, this section investigates the asymptotic case when the number of antennas at the edge approaches infinity (i.e., $N \to +\infty$).
Moreover, we consider the special case of $|\mathcal{Y}_m|=1$ and $\mathcal{Y}_m\bigcap\mathcal{Y}_j=\emptyset$ (i.e., each user group has only one unique user).
For notational simplicity, we denote the unique user in group $\mathcal{Y}_m$ as user $m=k$.
As each task only involves one user, non-IID data distribution among users in a single task does not exist, and therefore constraints \eqref{P1b}--\eqref{P1c} can be removed.
On the other hand, as $N\to+\infty$, the channels from different users to the edge would be asymptotically orthogonal \cite{massive1,massive2,massive3} and we have
\begin{align}
&G_{k,l}=\frac{|\mathbf{h}_k^H\mathbf{h}_l|^2}{\left\Vert\mathbf{h}_k\right\Vert_2^2}\to 0,\quad\forall k\neq l.
\end{align}
Based on such orthogonality feature, and putting $G_{k,l}=0$ for $k\neq l$ into $\Phi_m$ in $\rm{P}1$, problem $\mathrm{P}1$ under $N\to+\infty$ and $|\mathcal{Y}_m|=1$ is rewritten as
\begin{align}
&\mathrm{P}2:\mathop{\mathrm{min}}_{\substack{\mathbf{p},\,\mu}}
\quad\mu,
\nonumber\\
&\mathrm{s. t.}\quad
\beta_ka_k\left(
\frac{\xi BT}{D_k}\mathrm{log}_2\left(1+\frac{G_{k,k}p_{k}}{\sigma^2} \right)+A_k
\right)^{-b_k}
\leq\mu,\quad \forall k,
\nonumber\\
&\quad\quad
\sum_{k=1}^Kp_k= P,\quad p_k\geq 0,\quad\forall k,
\label{P2b}
\end{align}
where $\mu\in[0,1]$ is a slack variable and has the interpretation of \emph{classification error level}.
The following proposition gives the optimal solution to $\mathrm{P}2$.
\begin{proposition}
The optimal $\mathbf{p}^*$ to $\mathrm{P}2$ is
\begin{align}\label{pk*}
p_k^*(\mu)&=\Bigg[\frac{\sigma^2}{G_{k,k}}\,\mathrm{exp}\left(\frac{D_k\mathrm{ln}2\,}{\xi BT}\left[\left(\frac{\mu}{\beta_ka_k}\right)^{-1/b_k}-
A_k\right]\right)
\nonumber\\
&\quad{}
-\frac{\sigma^2}{G_{k,k}}\Bigg]^+,\quad k=1,\cdots,K,
\end{align}
where $\mu$ satisfies $\sum_{k=1}^Kp_k^*(\mu)=P$.
\end{proposition}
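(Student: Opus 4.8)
The plan is to exploit that, once the error level $\mu$ is fixed, problem $\mathrm{P}2$ decouples across users, so that the whole problem collapses to a one-dimensional search over $\mu$.

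\textbf{Step 1: rewrite each constraint as a power threshold.} For a fixed $\mu\in(0,1]$, I would turn the $k$-th constraint $\beta_ka_k\big(\tfrac{\xi BT}{D_k}\log_2(1+G_{k,k}p_k/\sigma^2)+A_k\big)^{-b_k}\le\mu$ into an explicit lower bound on $p_k$. Dividing by $\beta_ka_k>0$ gives $\big(\tfrac{\xi BT}{D_k}\log_2(1+G_{k,k}p_k/\sigma^2)+A_k\big)^{-b_k}\le \mu/(\beta_ka_k)$; since $x\mapsto x^{-b_k}$ is strictly decreasing on $(0,\infty)$ (this uses the implicit standing assumption $b_k>0$), this is equivalent to $\tfrac{\xi BT}{D_k}\log_2(1+G_{k,k}p_k/\sigma^2)+A_k\ge(\mu/(\beta_ka_k))^{-1/b_k}$. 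Isolating the logarithm and exponentiating with $2^x=\exp(x\ln 2)$ then yields $p_k\ge \tfrac{\sigma^2}{G_{k,k}}\big(\exp(\tfrac{D_k\ln 2}{\xi BT}[(\mu/(\beta_ka_k))^{-1/b_k}-A_k])-1\big)$, and intersecting with $p_k\ge 0$ gives exactly $p_k\ge p_k^*(\mu)$ with $p_k^*(\mu)$ the claimed expression; the $[\cdot]^+$ absorbs the case in which the historical data $A_k$ by itself already meets the target error, so that zero transmit power suffices for that user.

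\textbf{Step 2: reduce to a scalar equation in $\mu$.} I would next note that $p_k^*(\mu)$ is continuous and non-increasing in $\mu$, and strictly decreasing wherever it is positive, because $(\mu/(\beta_ka_k))^{-1/b_k}$ is strictly decreasing in $\mu$; hence $f(\mu):=\sum_{k=1}^K p_k^*(\mu)$ is continuous and non-increasing, with $f(\mu)\to+\infty$ as $\mu\to 0^+$ and $f(\mu)=0$ once $\mu\ge\max_k\beta_ka_kA_k^{-b_k}$. By Step 1, a level $\mu$ is feasible for $\mathrm{P}2$ iff some $\mathbf{p}\ge\mathbf 0$ satisfies $\sum_k p_k=P$ and $p_k\ge p_k^*(\mu)$ for all $k$; such a $\mathbf{p}$ exists iff $f(\mu)\le P$, since when $f(\mu)\le P$ one may set $p_k=p_k^*(\mu)$ and distribute the surplus $P-f(\mu)$ arbitrarily. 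Therefore the optimal value of $\mathrm{P}2$ is $\mu^*=\min\{\mu:f(\mu)\le P\}$, and by continuity together with the limits above, $\mu^*$ is the unique solution of $f(\mu)=P$, i.e.\ of $\sum_{k=1}^K p_k^*(\mu)=P$.

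\textbf{Step 3: identify the optimal allocation, and the main obstacle.} At $\mu=\mu^*$ the set $\{\mathbf p\ge\mathbf 0:\sum_k p_k=P,\ p_k\ge p_k^*(\mu^*)\ \forall k\}$ contains a single point, because $\sum_k p_k^*(\mu^*)=P$ forces $p_k=p_k^*(\mu^*)$ for every $k$; this point is the optimal $\mathbf p^*$, which is the stated formula. The only delicate bookkeeping is in Step 1 — tracking inequality directions through the decreasing map $x\mapsto x^{-b_k}$ and through exponentiation, and checking that degenerate cases ($a_k=0$, or $(\mu/(\beta_ka_k))^{-1/b_k}<A_k$) are harmless — while the one genuinely substantive point is the assertion in Step 2 that the power budget is active at optimality, which follows from $f$ being strictly decreasing on the region where it is positive together with $f(\mu)\to+\infty$ as $\mu\to 0^+$.
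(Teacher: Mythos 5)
Your proof is correct, but it takes a genuinely different route from the paper's. The paper forms the Lagrangian of $\mathrm{P}2$, writes down the stationarity conditions $\partial L/\partial \mu^*=0$ and $\partial L/\partial p_k^*=0$, shows the multiplier $\chi^*$ on the power budget must be nonzero, and then splits into the cases $p_k^*=0$ and $p_k^*>0$, using complementary slackness to conclude that in the latter case the error constraint of user $k$ is active; inverting that active constraint yields \eqref{pk*}. You instead bypass duality entirely: you invert each error constraint into an explicit per-user power threshold $p_k\ge p_k^*(\mu)$, characterize feasibility of a target level $\mu$ by $\sum_k p_k^*(\mu)\le P$, and use continuity and monotonicity of $\mu\mapsto\sum_k p_k^*(\mu)$ to pin down the optimal $\mu^*$ as the unique root of $\sum_k p_k^*(\mu)=P$, at which point the allocation is forced. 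Your argument is more elementary and self-contained -- it directly certifies global optimality without appealing to constraint qualifications or to the convexity of $\mathrm{P}2$, and it additionally proves existence and uniqueness of the root $\mu^*$, which the paper only asserts afterwards when justifying the bisection search. The paper's KKT derivation, on the other hand, makes the economic interpretation of the multipliers explicit and mirrors the style used in its other appendices. The degenerate cases you flag ($a_k=0$, or historical data alone meeting the target) are handled the same way in both arguments, by the $[\cdot]^+$ truncation, so there is no gap.
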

\begin{proof}
See Appendix B.
\end{proof}

To efficiently compute the classification error level $\mu$, it is observed that the function $p_k^*(\mu)$ is a decreasing function of $\mu$.
Therefore, the classification error level $\mu$ can be obtained from solving $\sum_{k=1}^Kp_k^*(\mu)=P$ using bisection method within interval $[0,1]$.
More specifically, given $\mu_{\mathrm{max}}$ and $\mu_{\mathrm{min}}$ (initially $\mu_{\mathrm{max}}=1$ and $\mu_{\mathrm{min}}=0$), we set $\mu=(\mu_{\mathrm{max}}+\mu_{\mathrm{min}})/2$.
If $\sum_kp_k^*(\mu)\geq P$, we update $\mu_{\mathrm{min}}=\mu$; otherwise, we update $\mu_{\mathrm{max}}=\mu$. This procedure is repeated until $|\mu_{\mathrm{max}}-\mu_{\mathrm{min}}|<\epsilon$ with $\epsilon=10^{-8}$.
Since bisection method has a linear convergence rate \cite{bisection}, and in each iteration we need to compute $K$ scalar functions $p_1^*(\mu),\cdots,p_K^*(\mu)$, the bisection method has a complexity of $\mathcal{O}(\mathrm{log}\left(1/\epsilon\right)K)$.

\textbf{Scaling Law of Learning Centric Communication.}
According to \textbf{Proposition 2}, the user transmit power $p_k$ is inversely proportional to the wireless channel gain $G_{k,k}=\left\Vert\mathbf{h}_k\right\Vert_2^2$.
On the other hand, it is exponentially dependent on the classification error level $\mu$ and the learning parameters $(a_k,b_k,D_k,A_k)$.
Moreover, among all parameters, $b_k$ is the most important factor, since $b_k$ is involved in both the power and exponential functions.
The above observations disclose that in edge machine learning systems, the learning parameters will have more significant impacts on the radio resource allocation than those of the wireless channels.

\textbf{Learning Centric versus Communication Centric Power Allocation.} Notice that the result in \eqref{pk*} is fundamentally different from the most well-known resource allocation schemes (e.g., iterative water-filling \cite{waterfilling} and max-min fairness \cite{fair}).
For example, the water-filling solution for maximizing the system throughput under $N\to+\infty$ is given by
\begin{align}\label{water}
&p_k^{\rm{WF}}=
\left(\frac{1}{\lambda\mathrm{ln}2\,}-\frac{\sigma^2}{G_{k,k}}\right)^+,
\end{align}
where $\lambda$ is a constant chosen such that $\sum_{k=1}^Kp_k^{\rm{WF}}=P$.
On the other hand, the max-min fairness solution under $N\to+\infty$ is given by
\begin{align}\label{fair}
&p_k^{\rm{FAIR}}=
P\left(\sum_{k=1}^K\frac{\sigma^2}{G_{k,k}} \right)^{-1}\,\frac{\sigma^2}{G_{k,k}}.
\end{align}
It can be seen from \eqref{water} and \eqref{fair} that the water-filling scheme would allocate more power resources to better channels, and the max-min fairness scheme would allocate more power resources to worse channels.
But no matter which scheme we adopt, the only impact factor is the channel condition $\sigma^2/G_{k,k}$.

\section{Large-scale Optimization under IID Datasets}

Although a KKT solution to $\mathrm{P}1$ has been derived in Section IV, it can be seen that MM-based LCPA requires a cubic complexity with respect to $K$.
This leads to time-consuming computations if $K$ is in the range of hundreds or more.
As a result, low-complexity large-scale optimization algorithms are indispensable.
To this end, in this section we consider the case of $N \to +\infty$ under IID datasets, and develop an algorithm based on the FOM.

As $N\to+\infty$, we put $G_{k,l}=0$ for $k\neq l$ into $\Phi_m$ in $\rm{P}1$, and the function $\Phi_m$ is asymptotically equal to
\begin{align}
&\Xi_m(\mathbf{p})
=a_m\left[
\sum_{k\in\mathcal{Y}_m} \frac{\xi BT}{D_m}\mathrm{log}_2\left(1+\frac{G_{k,k}p_k}{\sigma^2} \right)+A_m
\right]^{-b_m}. \label{Xi}
\end{align}
Therefore, the problem $\mathrm{P}1$ in the case of IID datasets and $N \to +\infty$ is equivalent to
\begin{align}
\mathrm{P}3:\mathop{\mathrm{min}}_{\substack{\mathbf{p}}}
\quad&\mathop{\mathrm{max}}_{m=1,\cdots,M}~\beta_m\Xi_m(\mathbf{p}),
\nonumber\\
\mathrm{s. t.}\quad&\sum_{k=1}^Kp_k=P,\quad p_k\geq 0,\quad k=1,\cdots,K. \label{P3}
\end{align}

The major challenge for solving $\mathrm{P}3$ comes from the nonsmooth operator $\rm{max}$ in the objective function, which hinders us from computing the gradients.
To deal with the non-smoothness, we reformulate $\mathrm{P}3$ into a smooth bilevel optimization problem with $\ell_1$-norm (simplex) constraints.
Observing that the projection onto a simplex in Euclidean space requires high computational complexities, a mirror-prox LCPA method working on non-Euclidean manifold is proposed.
In this way, the distance is measured by Kullback-Leibler (KL) divergence, and the non-Euclidean projection would have analytical expressions.
Lastly, with an extragradient step, the proposed mirror-prox LCPA converges to the global optimal solution to $\mathrm{P}3$ with an iteration complexity of $\mathcal{O}(1/\epsilon)$ \cite{BFOM1,BFOM2,BFOM3}, where $\epsilon$ is the target solution accuracy.

More specifically, we first equivalently transform $\mathrm{P}3$ into a smooth bilevel optimization problem.
By defining set $\mathcal{P}=\left\{\mathbf{p}\in\mathbb{R}^{K\times 1}_+:\left\Vert\mathbf{p}\right\Vert_1=P\right\}$ and introducing variables $\bm{\alpha}\in\mathbb{R}^{M\times 1}$ such that $\bm{\alpha}\in\mathcal{A}=\{\bm{\alpha}\in\mathbb{R}^{M\times 1}_+:\left\Vert\bm{\alpha}\right\Vert_1=1\}$, $\mathrm{P}3$ is rewritten as
\begin{align}\label{saddle}
&\mathrm{P}4:\mathop{\mathrm{min}}_{\substack{\mathbf{p}\in\mathcal{P}}}
~\mathop{\mathrm{max}}_{\bm{\alpha}\in\mathcal{A}}~\underbrace{\mathop{\sum}_{m=1}^{M}\alpha_m\times\beta_m\Xi_m(\mathbf{p})}_{\Upsilon(\bm{\alpha},\mathbf{p})}.
\end{align}
It can be seen from $\mathrm{P}4$ that $\Upsilon(\bm{\alpha},\mathbf{p})$ is differentiable with respect to either $\mathbf{p}$ or $\bm{\alpha}$, and the corresponding gradients are
\begin{subequations}
\begin{align}
\nabla_{\mathbf{p}}\Upsilon(\bm{\alpha},\mathbf{p})&=
\mathop{\sum}_{m=1}^{M}\alpha_m\beta_m\nabla_\mathbf{p}\Xi_m(\mathbf{p}), \label{gradient1}
\\
\nabla_{\bm{\alpha}}\Upsilon(\bm{\alpha},\mathbf{p})
&=
\left[\beta_1\Xi_1(\mathbf{p}),\cdots,\beta_M\Xi_M(\mathbf{p}) \right]^T, \label{gradient2}
\end{align}
\end{subequations}
where
\begin{align}
\nabla_\mathbf{p}\Xi_m(\mathbf{p})
&=
\left[\frac{\partial \Xi_m}{\partial p_1},\cdots,\frac{\partial \Xi_m}{\partial p_K} \right]^T, \label{gradient}
\end{align}
with its $j^{\mathrm{th}}$ element being
\begin{align}
\frac{\partial \Xi_m}{\partial p_j}&=
-\frac{a_mb_m\xi BT\,\mathbb{I}_{\mathcal{Y}_m}(j)}{D_m\mathrm{ln}2\,(\sigma^2G_{j,j}^{-1}+p_j)}
\nonumber\\
&\quad{}
\times
\left[
\sum_{k\in\mathcal{Y}_m} \frac{\xi BT}{D_m}\mathrm{log}_2\left(1+\frac{G_{k,k}p_{k}}{\sigma^2} \right)+A_m
\right]^{-b_m-1}.
\label{ximpartial}
\end{align}

However, $\mathrm{P}4$ is a bilevel problem, with both the upper layer variable $\mathbf{p}$ and the lower layer variable $\bm{\alpha}$ involved in the simplex constraints.
In order to facilitate the projection onto simplex constraints, below we consider a non-Euclidean (Banach) space induced by $\ell_1$-norm.
In such a space, the Bregman distance between two vectors $\mathbf{x}$ and $\mathbf{y}$ is the KL divergence
\begin{align}\label{W}
&\mathbb{KL}\left(\mathbf{x},\mathbf{y}\right)=\mathop{\sum}_{l=1,2,\cdots}x_l\,\mathrm{ln}\left(\frac{x_l}{y_l}\right),
\end{align}
and the following proposition can be established.
\begin{proposition}
If the classification error $\mu=\mathop{\mathrm{max}}_{m}\beta_m\Xi_m(\mathbf{p})$ is upper bounded by $\mu\leq\mu_0$, then $\Upsilon(\bm{\alpha},\mathbf{p})$ is $(L_{1},L_{2},L_{2},0)$--smooth in Banach space induced by $\ell_1$-norm, where
\begin{subequations}
\begin{align}
L_{1}&=\mathop{\mathrm{max}}_{\substack{m=1,\cdots,M\\k=1,\cdots,K}}~
\frac{\beta_ma_mb_m\xi BTG_{k,k}}{D_m\mathrm{ln}2\,\sigma^4}
\left(\frac{\mu_0}{\beta_ma_m}\right)^{1+1/b_m}
\nonumber\\
&\quad\times
\left[G_{k,k}+
\frac{(b_m+1)\xi BTH_m}{D_m\mathrm{ln}2}
\left(\frac{\mu_0}{\beta_ma_m}\right)^{1/b_m}\right], \label{L11}
\\
L_{2}&=\mathop{\mathrm{max}}_{m=1,\cdots,M}~
\frac{\beta_ma_mb_m\xi BTH_m}{D_m\mathrm{ln}2\,\sigma^2}
\left(\frac{\mu_0}{\beta_ma_m}\right)^{1+1/b_m}, \label{L12}
\end{align}
\end{subequations}
with $H_m:=\left\Vert[\mathbb{I}_{\mathcal{Y}_m}(1)\,G_{1,1},\cdots,\mathbb{I}_{\mathcal{Y}_m}(K)\,G_{K,K}]^T\right\Vert_2$.
\end{proposition}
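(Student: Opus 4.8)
The plan is to check, term by term, the $(L_{11},L_{12},L_{21},L_{22})$--smoothness definition used in the Bregman first--order method literature \cite{BFOM1,BFOM2,BFOM3}, specialized to the saddle function $\Upsilon$ in $\mathrm{P}4$ of \eqref{saddle}. Since both $\mathbf{p}$ and $\bm{\alpha}$ live in $\ell_1$-balls, the relevant dual norm on each block is the $\ell_\infty$-norm, so the four inequalities to establish are $\|\nabla_{\mathbf{p}}\Upsilon(\bm{\alpha},\mathbf{p})-\nabla_{\mathbf{p}}\Upsilon(\bm{\alpha},\mathbf{p}')\|_\infty\le L_1\|\mathbf{p}-\mathbf{p}'\|_1$, $\|\nabla_{\mathbf{p}}\Upsilon(\bm{\alpha},\mathbf{p})-\nabla_{\mathbf{p}}\Upsilon(\bm{\alpha}',\mathbf{p})\|_\infty\le L_2\|\bm{\alpha}-\bm{\alpha}'\|_1$, $\|\nabla_{\bm{\alpha}}\Upsilon(\bm{\alpha},\mathbf{p})-\nabla_{\bm{\alpha}}\Upsilon(\bm{\alpha},\mathbf{p}')\|_\infty\le L_2\|\mathbf{p}-\mathbf{p}'\|_1$, and $\|\nabla_{\bm{\alpha}}\Upsilon(\bm{\alpha},\mathbf{p})-\nabla_{\bm{\alpha}}\Upsilon(\bm{\alpha}',\mathbf{p})\|_\infty\le 0$. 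The last one is immediate from \eqref{gradient2}, since $\nabla_{\bm{\alpha}}\Upsilon$ does not depend on $\bm{\alpha}$; this is what makes the fourth constant $0$. Before the calculations I would record that each $\Xi_m$ in \eqref{Xi} is convex in $\mathbf{p}$, being the composition of the convex nonincreasing scalar map $t\mapsto a_mt^{-b_m}$ (for $t>0$) with the concave map $\mathbf{p}\mapsto t_m(\mathbf{p}):=\sum_{k\in\mathcal{Y}_m}\frac{\xi BT}{D_m}\log_2(1+G_{k,k}p_k/\sigma^2)+A_m$; hence $\max_m\beta_m\Xi_m$ is convex and the sublevel set $\{\mathbf{p}:\max_m\beta_m\Xi_m(\mathbf{p})\le\mu_0\}$ is convex, which is what legitimizes bounding gradient differences by integrating the Hessian (or the gradient) along the segment joining $\mathbf{p}$ and $\mathbf{p}'$, a segment that stays inside the admissible domain. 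On that domain $\beta_ma_mt_m^{-b_m}\le\mu_0$, i.e. $t_m^{-b_m}\le\mu_0/(\beta_ma_m)$ and hence $t_m^{-1}\le(\mu_0/(\beta_ma_m))^{1/b_m}$, $t_m^{-b_m-1}\le(\mu_0/(\beta_ma_m))^{1+1/b_m}$, and $t_m^{-b_m-2}\le(\mu_0/(\beta_ma_m))^{1+2/b_m}$; these three inequalities will be used repeatedly.

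Next I would dispatch the two cross terms together. From \eqref{gradient1}, $\nabla_{\mathbf{p}}\Upsilon(\bm{\alpha},\mathbf{p})-\nabla_{\mathbf{p}}\Upsilon(\bm{\alpha}',\mathbf{p})=\sum_{m}(\alpha_m-\alpha'_m)\beta_m\nabla_{\mathbf{p}}\Xi_m(\mathbf{p})$, so by Hölder's inequality its $\ell_\infty$-norm is at most $\|\bm{\alpha}-\bm{\alpha}'\|_1\max_m\beta_m\|\nabla_{\mathbf{p}}\Xi_m(\mathbf{p})\|_\infty$; from \eqref{gradient2}, $\nabla_{\bm{\alpha}}\Upsilon(\bm{\alpha},\mathbf{p})-\nabla_{\bm{\alpha}}\Upsilon(\bm{\alpha},\mathbf{p}')$ has $m$th entry $\beta_m(\Xi_m(\mathbf{p})-\Xi_m(\mathbf{p}'))$, and the mean value theorem bounds each by $\beta_m\big(\sup_{\mathbf{q}}\|\nabla_{\mathbf{p}}\Xi_m(\mathbf{q})\|_\infty\big)\|\mathbf{p}-\mathbf{p}'\|_1$. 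Thus both cross constants reduce to $\max_m\beta_m\sup_{\mathbf{p}}\|\nabla_{\mathbf{p}}\Xi_m(\mathbf{p})\|_\infty$ over the domain. Using \eqref{ximpartial}, $p_j\ge 0$ gives $(\sigma^2G_{j,j}^{-1}+p_j)^{-1}\le G_{j,j}/\sigma^2$, and $\max_j\mathbb{I}_{\mathcal{Y}_m}(j)G_{j,j}\le H_m$ because the $\ell_\infty$-norm is dominated by the $\ell_2$-norm $H_m$; together with $t_m^{-b_m-1}\le(\mu_0/(\beta_ma_m))^{1+1/b_m}$ this yields $\beta_m\|\nabla_{\mathbf{p}}\Xi_m(\mathbf{p})\|_\infty\le\frac{\beta_ma_mb_m\xi BTH_m}{D_m\ln 2\,\sigma^2}(\mu_0/(\beta_ma_m))^{1+1/b_m}$, and taking the maximum over $m$ gives exactly $L_2$ in \eqref{L12}.

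The main work is the $L_1$ term. Since $\nabla_{\mathbf{p}}\Upsilon(\bm{\alpha},\cdot)=\sum_m\alpha_m\beta_m\nabla_{\mathbf{p}}\Xi_m$ with $\bm{\alpha}$ in the simplex, its $\ell_1\!\to\!\ell_\infty$ Lipschitz modulus is at most $\max_m\beta_m$ times that of $\nabla_{\mathbf{p}}\Xi_m$, which by convexity of the domain is at most $\sup_{\mathbf{p}}\max_{i,j}|[\nabla^2_{\mathbf{p}}\Xi_m(\mathbf{p})]_{ij}|$ (the $\ell_1\!\to\!\ell_\infty$ operator norm of the Hessian). Differentiating \eqref{ximpartial} once more and writing $g_j:=\partial t_m/\partial p_j=\frac{\xi BT}{D_m\ln 2}\mathbb{I}_{\mathcal{Y}_m}(j)(\sigma^2G_{j,j}^{-1}+p_j)^{-1}$, one finds $[\nabla^2_{\mathbf{p}}\Xi_m]_{ij}=a_mb_m(b_m+1)t_m^{-b_m-2}g_ig_j$ for $i\neq j$, and the diagonal entries carry the extra nonnegative term $a_mb_mt_m^{-b_m-1}\frac{\xi BT\,\mathbb{I}_{\mathcal{Y}_m}(j)}{D_m\ln 2\,(\sigma^2G_{j,j}^{-1}+p_j)^2}$; all entries are nonnegative. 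From $g_ig_j\le\max(g_i^2,g_j^2)$ every off-diagonal entry is dominated by a diagonal one, so $\max_{i,j}|[\nabla^2_{\mathbf{p}}\Xi_m]_{ij}|=\max_j[\nabla^2_{\mathbf{p}}\Xi_m]_{jj}$, and it remains to bound the two pieces of a diagonal entry: using $g_j\le\frac{\xi BT}{D_m\ln 2}\frac{G_{j,j}}{\sigma^2}$, $g_j^2\le\big(\frac{\xi BT}{D_m\ln 2\,\sigma^2}\big)^2G_{j,j}H_m$ (via $\mathbb{I}_{\mathcal{Y}_m}(j)G_{j,j}\le H_m$), $(\sigma^2G_{j,j}^{-1}+p_j)^{-2}\le G_{j,j}^2/\sigma^4$, and the two power bounds $t_m^{-b_m-1}\le(\mu_0/(\beta_ma_m))^{1+1/b_m}$, $t_m^{-b_m-2}\le(\mu_0/(\beta_ma_m))^{1+2/b_m}$, the sum of the two pieces factors into $\frac{a_mb_m\xi BTG_{j,j}}{D_m\ln 2\,\sigma^4}(\mu_0/(\beta_ma_m))^{1+1/b_m}\big[G_{j,j}+\frac{(b_m+1)\xi BTH_m}{D_m\ln 2}(\mu_0/(\beta_ma_m))^{1/b_m}\big]$; multiplying by $\beta_m$ and maximizing over $m$ and over $j$ (renamed $k$) gives precisely $L_1$ in \eqref{L11}. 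I expect the fiddly part to be exactly this last assembly: tracking the asymmetric use of the loose bound $\mathbb{I}_{\mathcal{Y}_m}(j)G_{j,j}\le H_m$ in the first piece while keeping the tighter factor $G_{k,k}$ in the second, and verifying carefully that the diagonal Hessian entry genuinely dominates all off-diagonal ones so that the $\ell_1\!\to\!\ell_\infty$ norm collapses to a maximum of diagonal entries.
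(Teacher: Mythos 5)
Your proposal is correct and reaches exactly the constants in \eqref{L11}--\eqref{L12}, but it packages the key estimate differently from the paper. The paper isolates a lemma bounding $\left\Vert\nabla_{\mathbf{p}}\Xi_m\right\Vert_2$ and the Lipschitz modulus of $\nabla_{\mathbf{p}}\Xi_m$ by writing each component as a product $\nabla_{p_j}\Xi_m=h_m(\mathbf{p})\,g_{m,j}(\mathbf{p})$ and applying the product rule $|hg-h'g'|\leq|h|\,|g-g'|+|h-h'|\,|g'|$ together with separate bounds on $|h_m|$, $|g_{m,j}|$ and their individual Lipschitz constants; you instead differentiate once more, bound the $\ell_1\!\to\!\ell_\infty$ operator norm of $\nabla^2_{\mathbf{p}}\Xi_m$ by its largest entry, and argue via $g_ig_j\leq\max(g_i^2,g_j^2)$ that this maximum sits on the diagonal. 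The two computations are term-by-term equivalent (the two pieces of your diagonal Hessian entry are precisely the two summands in the paper's product-rule bound), so neither buys a sharper constant, but your route requires the extra diagonal-dominance observation while the paper's avoids it by bounding each coordinate difference directly and taking the maximum over $j$ at the end. Your treatment of the cross terms via H\"older and the mean value theorem in the $\ell_\infty$/$\ell_1$ pairing is a minor simplification of the paper's $\ell_2$-based argument (which then invokes $\left\Vert\cdot\right\Vert_2\leq\left\Vert\cdot\right\Vert_1$). One genuine improvement on your side: you explicitly note that the sublevel set $\{\mathbf{p}:\max_m\beta_m\Xi_m(\mathbf{p})\leq\mu_0\}$ is convex, so the segment along which the gradient or Hessian is integrated stays where the bounds $t_m^{-b_m-1}\leq(\mu_0/(\beta_ma_m))^{1+1/b_m}$ etc.\ are valid; the paper takes suprema over all of $\mathcal{P}$ while its pointwise bounds only hold under $\mu\leq\mu_0$, and your convexity remark is what makes that step rigorous.
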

\begin{proof}
See Appendix C.
\end{proof}

The smoothness result in \textbf{Proposition 3} enables us to apply mirror descent (i.e., generalized gradient descent in non-Euclidean space) to $\mathbf{p}$ and mirror ascent to $\bm{\alpha}$ in the $\ell_1$-space \cite{BFOM2}.
This leads to the proposed mirror-prox LCPA, which is an iterative algorithm that involves i) a proximal step and ii) an extragradient step.
In particular, the mirror-prox LCPA initially chooses a feasible $\mathbf{p}=\mathbf{p}^{[0]}\in\mathcal{P}$ and $\bm{\alpha}=\bm{\alpha}^{[0]}\in\mathcal{A}$ (e.g., $\mathbf{p}^{[0]}=P/K\,\bm{1}_K$ and $\bm{\alpha}^{[0]}=1/M\,\bm{1}_M$).
Denoting the solution at the $n^{\mathrm{th}}$ iteration as $(\mathbf{p}^{[n]},\bm{\alpha}^{[n]})$, the following equations are used to update the next-round solution \cite{BFOM2}:
\begin{subequations}
\begin{align}
\mathbf{p}^{\diamond}&=
\mathop{\mathrm{argmin}}_{\mathbf{p}\in\mathcal{P}}~
\mathbb{KL}\left(\mathbf{p},\mathbf{p}^{[n]}\right)
\nonumber\\
&\quad{}
+\eta \, \mathbf{p}^T
\left[\mathop{\sum}_{m=1}^{M}\alpha_m^{[n]}\beta_m\nabla_\mathbf{p}\Xi_m(\mathbf{p}^{[n]})\right],
\label{BFOM1}
\\
\bm{\alpha}^{\diamond}&=
\mathop{\mathrm{argmin}}_{\bm{\alpha}\in\mathcal{A}}~
\mathbb{KL}\left(\bm{\alpha},\bm{\alpha}^{[n]}\right)
\nonumber\\
&\quad{}
-\eta
\left[\beta_1\Xi_1(\mathbf{p}^{[n]}),\cdots,\beta_M\Xi_M(\mathbf{p}^{[n]}) \right]\bm{\alpha}, \label{BFOM2}
\\
\mathbf{p}^{[n+1]}&=
\mathop{\mathrm{argmin}}_{\mathbf{p}\in\mathcal{P}}~
\mathbb{KL}\left(\mathbf{p},\mathbf{p}^{[n]}\right)
\nonumber\\
&\quad{}
+\eta \, \mathbf{p}^T
\left[\mathop{\sum}_{m=1}^{M}\alpha_m^{\diamond}\beta_m\nabla_\mathbf{p}\Xi_m(\mathbf{p}^{\diamond})\right],
\label{BFOM3}
\\
\bm{\alpha}^{[n+1]}&=
\mathop{\mathrm{argmin}}_{\bm{\alpha}\in\mathcal{A}}~\mathbb{KL}\left(\bm{\alpha},\bm{\alpha}^{[n]}\right)
\nonumber\\
&\quad{}
-\eta
\left[\beta_1\Xi_1(\mathbf{p}^{\diamond}),\cdots,\beta_M\Xi_M(\mathbf{p}^{\diamond}) \right]\bm{\alpha}, \label{BFOM4}
\end{align}
\end{subequations}
where $\eta$ is the step-size, and the terms inside $[\cdots]$ in \eqref{BFOM1}--\eqref{BFOM4} are obtained from \eqref{gradient1}--\eqref{gradient2}.
Notice that a small $\eta$ would lead to slow convergence of the algorithm while a large $\eta$ would cause the algorithm to diverge.
According to \cite{BFOM2}, $\eta$ should be chosen inversely proportional to Lipschitz constant $L_1$ or $L_2$ derived in \textbf{Proposition 3}.
In this paper, we set $\eta=10^3/L_2$ with $\mu_0=0.1$, which empirically provides fast convergence of the algorithm.

\textbf{How Mirror-prox LCPA Works.}
The formulas \eqref{BFOM1}--\eqref{BFOM2} update the variables along their gradient direction, while keeping the updated point $\{\mathbf{p}^{\diamond},\bm{\alpha}^{\diamond}\}$ close to the current point $\{\mathbf{p}^{[n]},\bm{\alpha}^{[n]}\}$.
This is achieved via the \emph{proximal operator} that minimizes the distance $\mathbb{KL}\left(\mathbf{p},\mathbf{p}^{[n]}\right)$ (or $\mathbb{KL}\left(\bm{\alpha},\bm{\alpha}^{[n]}\right)$) plus a first-order linear function.
Since the KL divergence is the  Bregman distance, the update \eqref{BFOM1}--\eqref{BFOM2} is a Bregman proximal step.
On the other hand, the gradients in \eqref{BFOM3} and \eqref{BFOM4} are computed using $\mathbf{p}^{\diamond}$ and $\bm{\alpha}^{\diamond}$, respectively.
By doing so, we can obtain the look-ahead gradient at the intermediate point $\mathbf{p}^{\diamond}$ and $\bm{\alpha}^{\diamond}$ for updating $\mathbf{p}^{[n+1]}$ and $\bm{\alpha}^{[n+1]}$.
This ``look-ahead'' feature is called \emph{extragradient}.

Lastly, we put the Bregman distance $\mathbb{KL}$ in \eqref{W}, the function $\Xi_m$ in \eqref{Xi}, the gradient in \eqref{gradient}, and a proper $\eta$ into \eqref{BFOM1}--\eqref{BFOM4}.
Based on the KKT conditions, the equations \eqref{BFOM1}--\eqref{BFOM2} are shown to be equivalent to
\begin{subequations}
\begin{align}
p^{\diamond}_k&=\left\{
\sum_{i=1}^K
p^{[n]}_i\mathrm{exp}\left[
-\eta\left(\mathop{\sum}_{m=1}^{M}\alpha_m^{[n]}\beta_m\nabla_{p_i}\Xi_m(\mathbf{p}^{[n]})\right)
\right]\right\}^{-1}
 \nonumber
\\
&\quad{}
\times
P\, p^{[n]}_k\mathrm{exp}\left[
-\eta\left(\mathop{\sum}_{m=1}^{M}\alpha_m^{[n]}\beta_m\nabla_{p_k}\Xi_m(\mathbf{p}^{[n]})\right)
\right]
,
\nonumber\\
&\quad{}
k=1,\cdots,K,
\label{BFOM21}
\\
\alpha^{\diamond}_m&=
\left(
\sum_{i=m}^M
\alpha^{[n]}_i\mathrm{exp}\left[
\eta\,\beta_i
\Xi_i(\mathbf{p}^{[n]})
\right]\right)^{-1}
\nonumber\\
&\quad{}
\times
\alpha^{[n]}_m\mathrm{exp}\left[
\eta\,\beta_m
\Xi_m(\mathbf{p}^{[n]})
\right]
,\quad
m=1,\cdots,M.
\label{BFOM22}
\end{align}
\end{subequations}
The equations \eqref{BFOM3}--\eqref{BFOM4} can be similarly reduced to closed-form expressions.

According to \textbf{Proposition 3} and \cite{BFOM1}, the mirror-prox LCPA algorithm is guaranteed to converge to the optimal solution to $\mathrm{P}4$.
But in practice, we can terminate the iterative procedure when the norm $\left\Vert\mathbf{p}^{[n]}-\mathbf{p}^{[n-1]}\right\Vert_{\infty}$ is small enough, e.g., $\left\Vert\mathbf{p}^{[n]}-\mathbf{p}^{[n-1]}\right\Vert_{\infty}<10^{-8}$.
The entire procedure for computing the solution to $\mathrm{P}4$ using the mirror-prox LCPA is summarized in the right hand branch of Fig.~4c.
In terms of computational complexity, computing \eqref{BFOM21} requires a complexity of $\mathcal{O}(MK)$.
Since the number of iterations for the mirror-prox LCPA to converge is $\mathcal{O}(1/\epsilon)$ with $\epsilon$ being the target solution accuracy, the total complexity of mirror-prox LCPA would be $\mathcal{O}(MK/\epsilon)$.

\section{Simulation Results and Discussions}

This section provides simulation results to evaluate the performance of the proposed algorithms.
It is assumed that the noise power $\sigma^2=-87~\mathrm{dBm}$  (corresponding to power spectral density $-140~\mathrm{dBm/Hz}$ with $180~\mathrm{kHz}$ bandwidth \cite{iot3}), which includes thermal noise and receiver noise.
The total transmit power at users is set to $P=13~\mathrm{dBm}$ (i.e., $20~\rm{mW}$), with the communication bandwidth $B=180~\mathrm{kHz}$.
The path loss of the $k^{\mathrm{th}}$ user $\varrho_{k}=-100~\rm{dB}$ is adopted \cite{wang}, and $\mathbf{h}_{k}$ is generated according to $\mathcal{CN}(\mathbf{0},\varrho_{k}\mathbf{I}_N)$ \cite{massive2}.
Without otherwise specified, it is assumed that $\xi=1$.
We set $Z_k^{\mathrm{min}} =0$ and $Z_k^{\mathrm{max}}=+\infty$ for all $k$ in Sections VII-A to VII-C.
Simulations with upper and lower bounds on data amount will be presented in Section VII-D.
Each point in the figures is obtained by averaging over $10$ simulation runs, with independent channels in each run.
All optimization problems are solved by Matlab R2015b on a desktop with Intel Core i5-4570 CPU at 3.2~GHz and 8~GB RAM.
All the classifiers are trained by Python 3.6 on a GPU server with Intel Core i7-6800 CPU at 3.4 GHz and GeForce GTX 1080 GPU.

\subsection{CNN and SVM}

We consider the aforementioned CNN and SVM classifiers with the number of learning models $M=2$ at the edge: i) Classification of MNIST dataset \cite{MNIST} via CNN; ii) Classification of digits dataset in Scikit-learn \cite{sklearn} via SVM.
The data size of each sample is $D_1=6276~\mathrm{bits}$ for the MNIST dataset and $D_2=324~\mathrm{bits}$ for the digits dataset in Scikit-learn.
It is assumed that there are $A_1=300$ CNN samples and $A_2=200$ SVM samples in the historical dataset.
The parameters in the two classification error models are assumed to be perfectly known and they are given by $(a_1,b_1)=(7.3,0.69)$ for CNN and $(a_2,b_2)=(5.2,0.72)$ for SVM as in Fig.~3b.
Finally, it is assumed that $(\beta_1,\beta_2)=(1,1.2)$ since the approximation error of SVM in Fig.~3b is larger than that of CNN.

To begin with, the case of $N=20$ and $K=4$ with $\mathcal{Y}_1=\{1\}$ and $\mathcal{Y}_2=\{2,3,4\}$ is simulated.
Under the above settings, we compute the collected sample sizes by executing the proposed MM-based LCPA, and the maximum error of classifiers (i.e., the worse classification result of CNN and SVM, where the classification error for each task is obtained from the machine learning experiment using the sample sizes from the power allocation algorithms) versus the total transmission time $T$ is shown in Fig. 5a.
Besides the proposed MM-based LCPA, we also simulate two benchmark schemes: 1) Max-min fairness scheme \cite[Sec. II-C]{fair}; 2) Sum-rate maximization scheme \cite[Sec. IV]{sumrate}.
It can be seen from Fig.~5a that the proposed MM-based LCPA algorithm with $\mathcal{I}=10$ has a significantly smaller classification error compared to other schemes, and the gap concisely quantifies the benefit brought by more training images for CNN under joint communication and learning design.
For example, at $T=20$ in Fig.~5a, the proposed MM-based LCPA collects $2817$ MNIST images on average, while the sum-rate maximization and the max-min fairness schemes collect $1686$ images and $1781$ images, respectively.
Furthermore, if we target at the same learning error, the proposed algorithm saves the transmission time by at least $30\%$ compared to benchmark schemes.
This can be seen from Fig.~5a at the target error $4.5\%$, where the proposed algorithm takes $10$ seconds for transmission, but other methods require about $20$ seconds.
The saved time enables the edge to collect data for other edge computing tasks \cite{mec}.

\begin{figure*}[!t]
 \centering
  \subfigure[]{
    \includegraphics[width=58mm]{./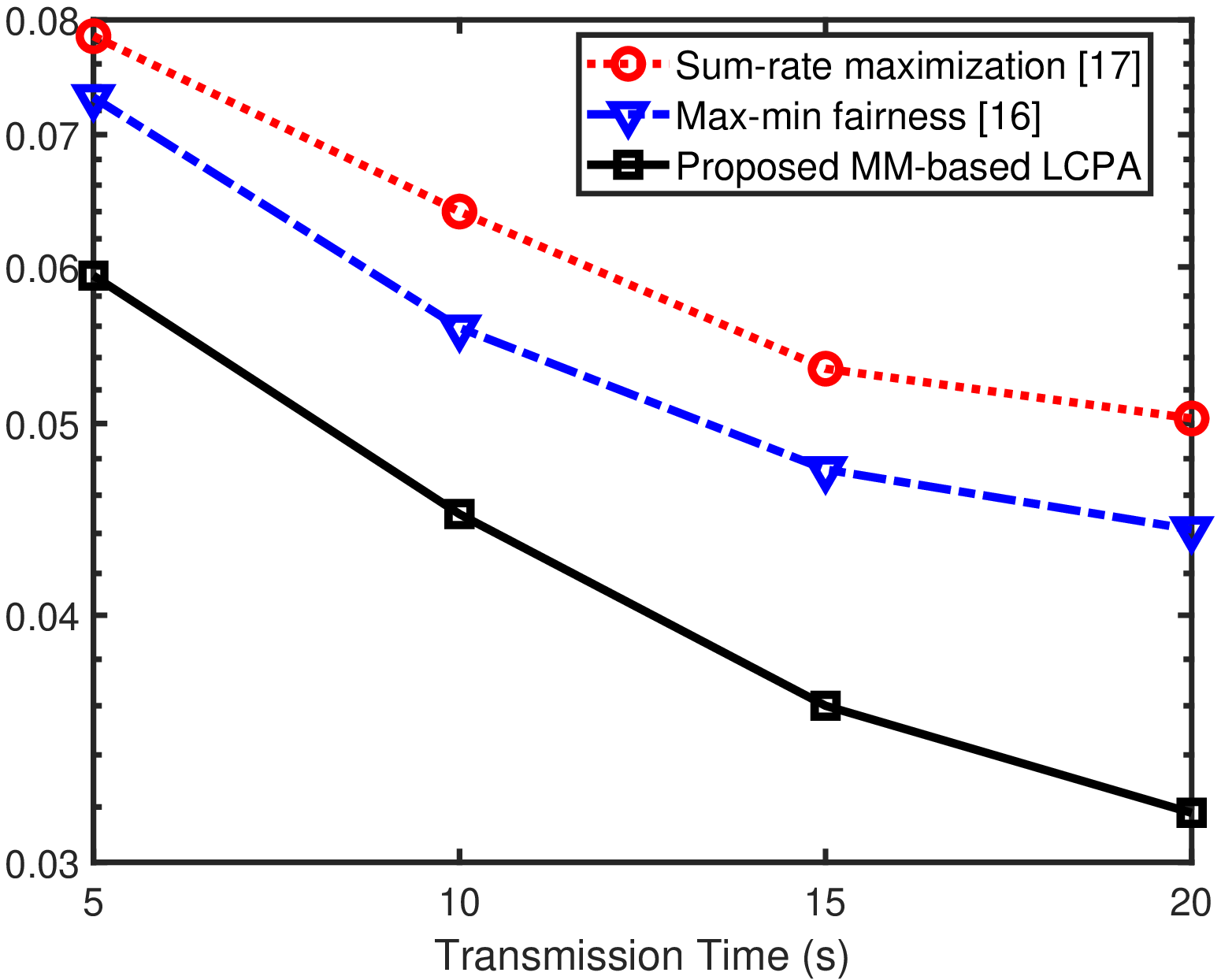}}
      \subfigure[]{
    \includegraphics[width=58mm]{./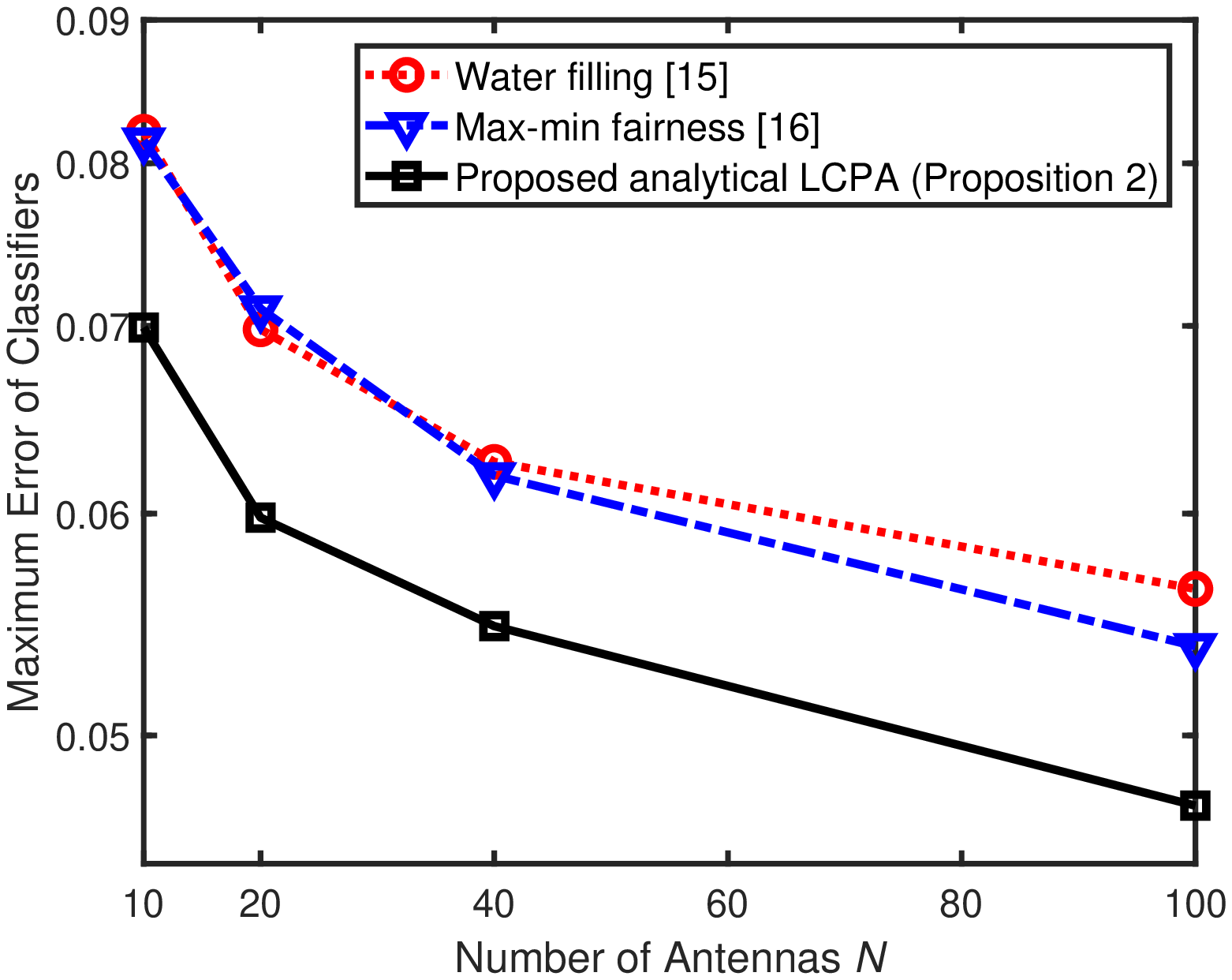}}
      \subfigure[]{
    \includegraphics[width=58mm]{./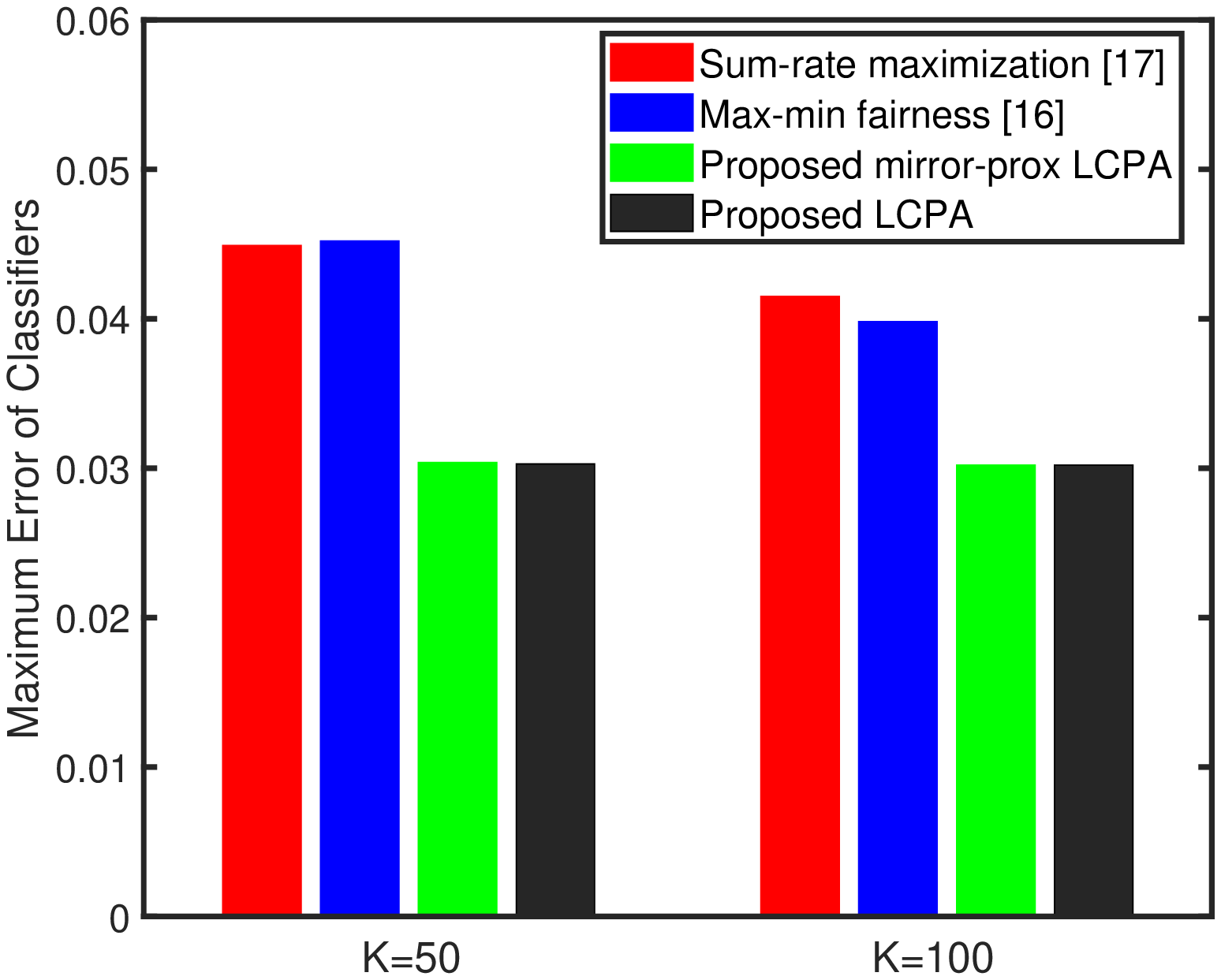}}
  \caption{a) Maximum error of classifiers versus total transmission time $T$ when $N=20$ and $K=4$; b) Maximum error of classifiers versus the number of antennas $N$ when $M=K=2$; c) Maximum error of classifiers versus the number of users $K$ when $N=100$;
}
\end{figure*}

To get more insight into the edge learning system, the case of $K=2$ with $\mathcal{Y}_1=\{1\}$ and $\mathcal{Y}_2=\{2\}$ at $T=5~\mathrm{s}$ is simulated, and the classification error versus the number of antennas $N=\{10,20,40,100\}$ is shown in Fig.~5b.
It can be seen from Fig.~5b that the classification error decreases as the number of antennas increases, which demonstrates the advantage of employing massive MIMO in edge machine learning.
More importantly, the proposed analytical solution in \textbf{Proposition 2} outperforms the water-filling\footnote{In the case of large $N$, sum-rate maximization scheme \cite[Sec. IV]{sumrate} would reduce to the iterative water-filling scheme  \cite{waterfilling}, which allocates power according to \eqref{water}.} and max-min fairness schemes even at a relatively small number of antennas $N=10$.
This is achieved by allocating much more power resources to the first user (i.e., user uploading datasets for CNN), because training CNN is more difficult than training SVM.
In particular, the transmit powers in $\rm{mW}$ are given by: 1) $(p_1,p_2)=(19.8473, 0.1524)$ for the analytical LCPA scheme; 2) $(p_1,p_2)=(9.9862, 10.0138)$ for the water-filling scheme; and 3) $(p_1,p_2)=(10.0869, 9.9131)$ for the max-min fairness scheme.
Notice that the performance gain brought by LCPA in Fig.~5b is slightly smaller than that in Fig.~5a, since the ratio $|\mathcal{Y}_1|/|\mathcal{Y}_2|$ is increased. But no matter what value $|\mathcal{Y}_1|$ and $|\mathcal{Y}_2|$ take, the proposed LCPA always outperforms existing algorithms due to its learning centric feature.

To verify the performance and the low complexity nature of the mirror-prox LCPA in Section VI when the number of antennas is large, the case of $N=100$
and $K\in\{50,100\}$ at $T=5~\mathrm{s}$ is simulated, with $\mathcal{Y}_1$ containing the first $1/5$ users and $\mathcal{Y}_2$ containing the rest $4/5$ users.
The maximum error of classifiers versus the number of users $K$ is shown in Fig.~5c.
It can be seen that the proposed mirror-prox LCPA algorithm significantly reduces the classification error compared to the water-filling and max-min fairness schemes, and it achieves performance close to that of the MM-based LCPA.
On the other hand, the average execution time at $K=100$ is given by: 1) $18.2496~\rm{s}$ for the MM-based LCPA; 2) $0.5673~\rm{s}$ for the mirror-prox LCPA; 3) $0.0044~\rm{s}$ for the water-filling scheme; and 4) $0.0054~\rm{s}$ for the max-min fairness scheme.
Compared with MM-based LCPA, the mirror-prox LCPA saves at least $95\%$ of the computation time, which corroborates the linear complexity derived in Section VI.

\subsection{Deep Neural Networks}

Next, we consider the ResNet-$110$ as task $1$ and the CNN in Section III as task $2$ at the edge, with $D_1=24584~\mathrm{bits}$ and $D_2=6276~\mathrm{bits}$.
The error rate parameters are given by $(a_1,b_1)=(8.15,0.44)$ and $(a_2,b_2)=(7.3,0.69)$.
In addition, it is assumed that $(\beta_1,\beta_2)=(1,1)$, $T=200~\mathrm{s}$, and there is no historical sample at the edge (i.e., $A_1=A_2=0$).
We simulate the case of $N=20$ and $K=4$ with $\mathcal{Y}_1=\{1,2\}$ and $\mathcal{Y}_2=\{3,4\}$.
For ResNet-$110$, we assume that the datasets $\{\mathcal{D}_1,\mathcal{D}_2\}$ are formed by dividing the CIFAR-10 dataset into two parts, each with $30000$ different samples.
For CNN, we assume that the datasets $\{\mathcal{D}_3,\mathcal{D}_4\}$ are formed by dividing the MNIST dataset into two parts, each also with $30000$ different samples.
The worse classification error between the two tasks (obtained from the machine learning experiment using the sample sizes from the power allocation algorithms) is: 1) $14.13\%$ for MM-based LCPA; 2) $16.79\%$ for the sum-rate maximization scheme; and 3) $16.42\%$ for the max-min fairness scheme.
It can be seen that the proposed LCPA achieves the smallest classification error, which demonstrates the effectiveness of learning centric resource allocation under deep neural networks and large datasets.

\begin{figure*}[!t]
 \centering
  \subfigure[]{
    \includegraphics[width=58mm]{./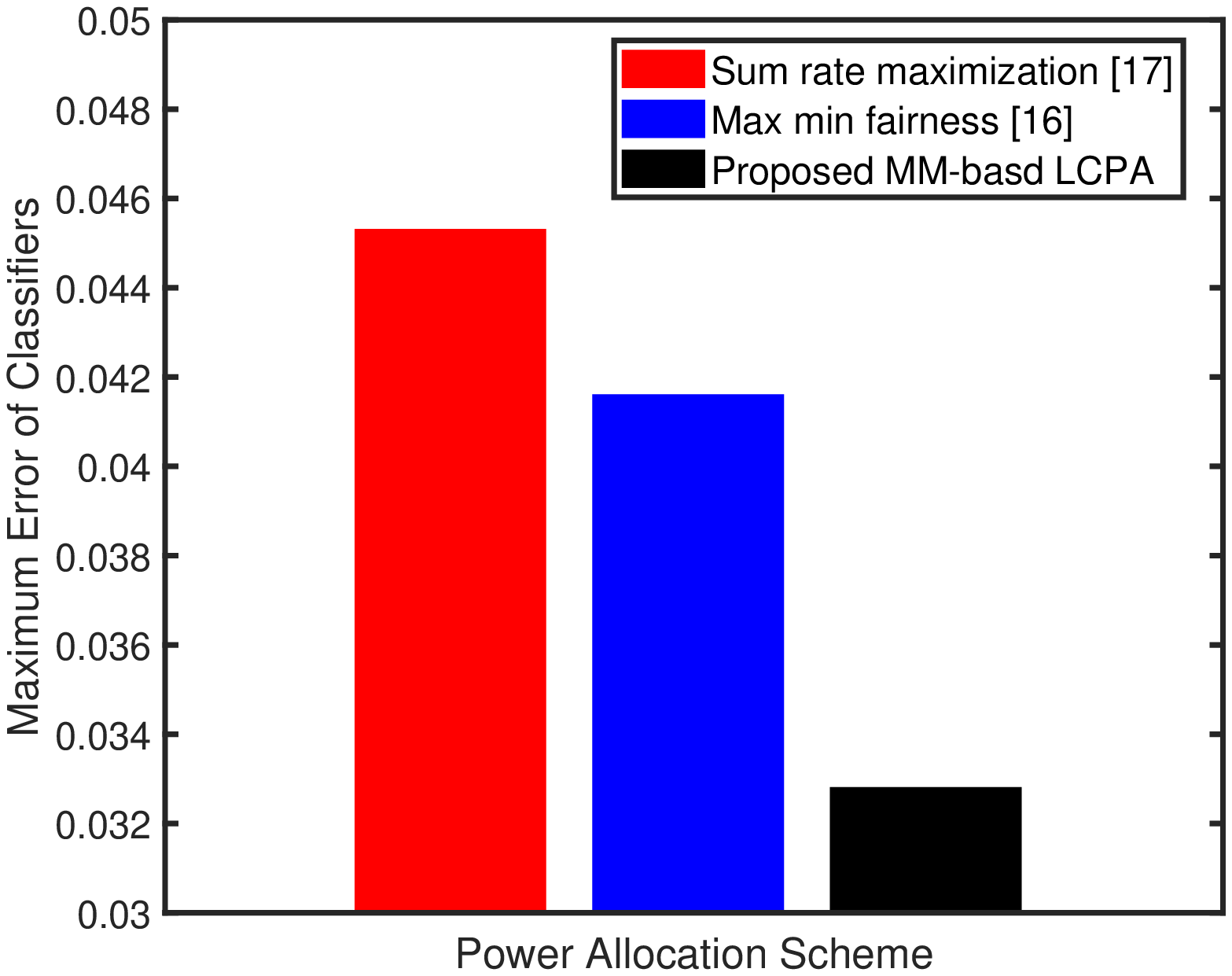}}
      \subfigure[]{
    \includegraphics[width=58mm]{./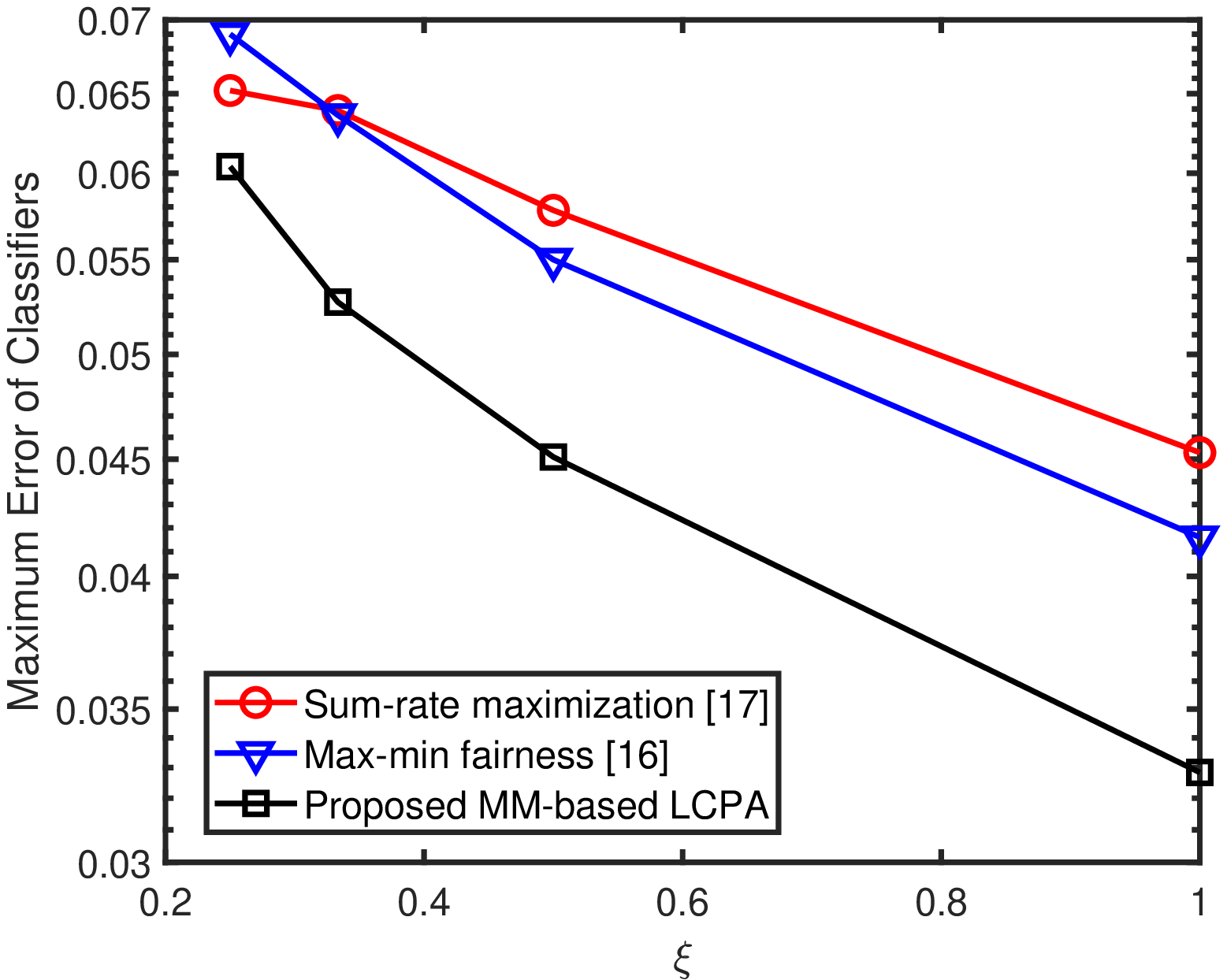}}
      \subfigure[]{
    \includegraphics[width=58mm]{./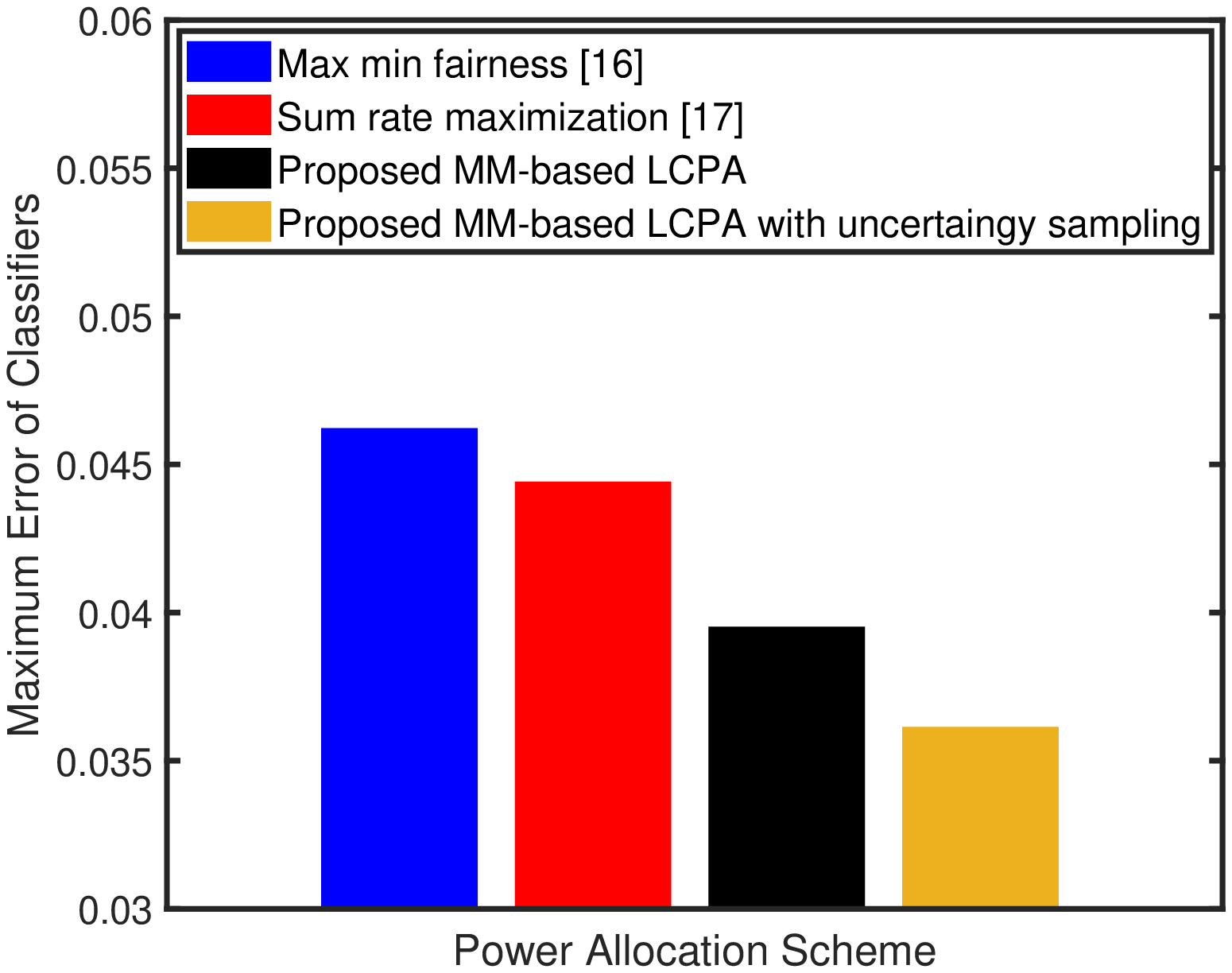}}
  \caption{a) Comparison of classification error when $N=100$ and $K=5$ at $T=10~\rm{s}$ with $(a_m,b_m)$ estimated from historical data; b) Maximum error of classifiers versus $\xi$ when $N=100$ and $K=5$; c) Comparison of classification error when $N=100$ and $K=6$ at $T=10~\rm{s}$ in the non-IID case.
}
\end{figure*}

\subsection{Practical Considerations}

In Sections VII-A and VII-B, we have assumed that the parameters $(a_m,b_m)$ are perfectly known.
In practice, we need to estimate them from the historical data.
More specifically, denote the historical data as $\mathcal{H}_k\subseteq \mathcal{D}_k$.
We take $2/3$ of the historical data in $\mathcal{H}_k$ as the training dataset $\mathcal{T}_k$, and the remaining $1/3$ as the validation dataset $\mathcal{V}_k$ for all $k$.
Therefore, the number of training samples for estimating $(a_m,b_m)$ is $\sum_{k\in\mathcal{Y}_m}|\mathcal{T}_k|$.
Based on the above dataset partitioning, the parameters $\{a_m, b_m\}$ are estimated in two stages:
\begin{itemize}
\item In the training stage, we can vary the sample size $v_m$ as $(v_m^{(1)},v_m^{(2)},\cdots)$ within $[0,\sum_{k\in\mathcal{Y}_m}|\mathcal{T}_k|]$.
For each $v_m^{(i)}$, we train the learning model for $E$ epochs and test the trained model on the validation dataset $\{\mathcal{V}_k\}_{k\in\mathcal{Y}_m}$.
The classification errors corresponding to the different sample sizes are given by $(\Psi_m(v_m^{(1)}),\Psi_m(v_m^{(2)}),\cdots)$.

\item In the fitting stage, with the classification error versus sample size, the parameters $(a_m,b_m)$ in $\Theta_m$ are found via the nonlinear least squares fitting in \eqref{fitting}.
\end{itemize}
Assuming that the complexity of processing each image (e.g., for CNN, each processing stage includes a backward pass and a forward pass) is $\mathcal{O}(W_m)$ where $W_m$ is the number of parameters in the learning model, the complexity in the training stage is $\mathcal{O}(EW_m\sum_{i}v_m^{(i)})$.
If $\sum_{i}v_m^{(i)}$ is smaller than the total number of samples after data collection, then $\mathcal{O}(EW_m\sum_{i}v_m^{(i)})$ would be smaller than the actual model training complexity.
On the other hand, the complexity in the fitting stage is negligible, since the problem \eqref{fitting} only has two variables.
Even using a naive brute-force search with a step size of $0.001$ for each dimension, the running time for solving \eqref{fitting} is only $5$ seconds on a desktop with Intel Core i5-4570 CPU at 3.2~GHz and 8~GB RAM.

To illustrate the above procedure, we consider the case of $N=100$ and $K=5$ with $M=2$, where the first task is to train the CNN with user $\mathcal{Y}_1=\{1\}$ and the second task is to train the SVM with users $\mathcal{Y}_1=\{2,3,4,5\}$.
For CNN, we assume that the dataset $\mathcal{D}_1$ contains $10000$ different samples from the MINST dataset.
For SVM, we assume that the datasets $\{\mathcal{D}_2,\mathcal{D}_3,\mathcal{D}_4,\mathcal{D}_5\}$ are formed by dividing the scikit-learn digit dataset into four parts, each with $250$ different samples.
To conform with the size of $\mathcal{D}_1$, the datasets $\{\mathcal{D}_2,\mathcal{D}_3,\mathcal{D}_4,\mathcal{D}_5\}$ are augmented by appending $9750$ samples (generated via random rotations of the original samples) to the $250$ original samples.
The test dataset for CNN contains another $1000$ samples from the MNIST dataset and the test dataset for SVM contains another $797$ samples from the scikit-learn digits dataset.
It is assumed that $|\mathcal{H}_1|=450$ and $|\mathcal{H}_2|=\cdots=|\mathcal{H}_5|=75$.
According to the ``$2/3$-training and $1/3$-validation'' partitioning of $\mathcal{H}_k$, we have $|\mathcal{T}_1|=300$ and $|\mathcal{T}_2|=\cdots=|\mathcal{T}_5|=50$, meaning that there are $300$ and $200$ historical samples for training CNN and SVM, respectively.
For CNN, we vary the sample size $v_1$ as $(v_1^{(1)},v_1^{(2)},\cdots)=(100,150,200,300)$ and perform training for $E=200$ epochs.
The classification errors on the validation dataset $\mathcal{V}_1$ are given by $(0.300, 0.200, 0.140, 0.120)$, and after model fitting we have $(a_1,b_1)=(9.74,0.77)$.
On the other hand, for SVM, we vary the sample size $v_2$ as $(v_2^{(1)},v_2^{(2)},\cdots)=(30,50,100,200)$.
The classification errors on the validation dataset $\{\mathcal{V}_2,\mathcal{V}_3,\mathcal{V}_4,\mathcal{V}_5\}$ are given by $(0.510, 0.280, 0.220, 0.050)$, and after model fitting we have $(a_2,b_2)=(14.27,0.85)$.

Based on the above estimation results for $(a_m,b_m)$, the classification errors at $T=10~\mathrm{s}$ for different schemes are compared in Fig.~6a.
It can be seen from Fig.~6a that the proposed LCPA reduces the minimum classification error by at least $20\%$ compared with other simulated schemes.
To demonstrate the performance of the proposed LCPA under various values of $\xi$, the classification error versus $\xi=\{1/4,1/3,1/2,1\}$ at $T=10~\mathrm{s}$ is shown in Fig.~6b.
It can be seen from Fig.~6b that the classification error increases as $\xi$ decreases, which is due to the loss of samples during wireless transmission.
However, no matter what value $\xi$ takes, the proposed LCPA achieves the minimum classification error among all the simulated schemes.

\subsection{Non-IID Dataset}

In the non-IID case, we add one more user (denoted as user 6) to the system considered in Section VII-C, with the first task training the CNN with users $\mathcal{Y}_1=\{1,6\}$ and the second task training the SVM with users $\mathcal{Y}_1=\{2,3,4,5\}$.
The dataset $\mathcal{D}_6$ repeats the first sample in the MINST dataset for $10000$ times and $|\mathcal{H}_6|=450$.
Out of this $\mathcal{H}_6$, we use $2/3$ as the training dataset $\mathcal{T}_6$ and $1/3$ as the validation dataset $\mathcal{V}_6$.
We train the CNN in Fig.~1 using the training dataset $\{\mathcal{T}_1,\mathcal{T}_6\}$ for $200$ epochs, and obtain the parameter $\mathbf{w}$ in the CNN.
We then compute the prediction confidence for all the data $\mathbf{d}\in\{\mathcal{V}_1,\mathcal{V}_6\}$ based on equation \eqref{uncertain} with the probability function being the softmax output of CNN.
It turns out that the least confident samples in $\mathcal{V}_1$ and $\mathcal{V}_6$ have the following scores: $\mathop{\mathrm{min}}_{\mathbf{d}\in\mathcal{V}_1}U(\mathbf{d})=0.274$ and $\mathop{\mathrm{min}}_{\mathbf{d}\in\mathcal{V}_6}U(\mathbf{d})=0.999$.
For user $1$, since the CNN model is not sure about its prediction, we set $Z_1^{\mathrm{min}}=100$ and $Z_1^{\mathrm{max}}=10000$.
For user $6$, since the CNN model can predict its validation data with very high confidence, we set $Z_6^{\mathrm{min}}=0$ and $Z_6^{\mathrm{max}}=10$.
Based on the ratio between $Z_1^{\mathrm{max}}$ and $Z_6^{\mathrm{max}}$, we use $450$ samples from $\mathcal{H}_1$ and $10/10000\times450$ samples from $\mathcal{H}_6$ for estimating $(a_m,b_m)$.
Then we have $(a_1,b_1)=(9.74,0.77)$.
For the SVM model, since the data from users $\{2,3,4,5\}$ are assumed to be IID, $(a_2,b_2)$ is the same as that in Section VII-C, i.e., $(a_2,b_2)=(14.27,0.85)$.

The comparison of classification error in the non-IID case is shown in Fig.~6c.
It can be seen from Fig.~6c that the performance of MM-based LCPA (without uncertainty sampling) becomes worse than that in Fig.~6a, which is due to the additional resources allocated to user $6$.
However, the proposed MM-based LCPA still outperforms the sum-rate maximization and max-min fairness schemes.
More importantly, by adopting the uncertainty sampling method to distinguish the users' data quality, the classification error of LCPA can be further reduced (e.g., from the black bar to the yellow bar in Fig.~6c).
This demonstrates the benefit brought by \emph{joint estimation of sample size and prediction confidence} in edge machine learning systems.

\section{Conclusions}

This paper has introduced the LCPA concept to edge machine learning.
By adopting an empirical classification error model, learning efficient edge resource allocation has been obtained via the MM-based LCPA algorithm.
In the large-scale settings, a fast FOM has been derived to tackle the curse of high-dimensionality.
Simulation results have shown that the proposed LCPA algorithms achieve lower classification errors than existing power allocation schemes.
Furthermore, the proposed fast algorithm significantly reduces the execution time compared with the MM-based LCPA while still achieving satisfactory performance.

\appendices
\section{Proof of Proposition 1}
To prove part (i), consider the following inequality:
\begin{align}
&-\mathrm{ln}\left(\sum_{l=1,l\neq k}^K\frac{G_{k,l}p^\star_{l}}{\sigma^2}+1\right)
-\frac{\sum_{l=1,l\neq k}^KG_{k,l}p_{l}/\sigma^2+1}{\sum_{l=1,l\neq k}^KG_{k,l}p^\star_{l}/\sigma^2+1}
+1
\nonumber\\
&
\leq
-
\mathrm{ln}\left(\sum_{l=1,l\neq k}^K\frac{G_{k,l}p_{l}}{\sigma^2}+1
\right), \label{A1}
\end{align}
which is obtained from $-\mathrm{ln}(x')-\frac{1}{x'}(x-x')\leq-\mathrm{ln}(x)$ for any $(x,x')$ due to the convexity of $-\mathrm{ln}(x)$.
Adding $\mathrm{ln}\left(\sum_{l=1}^KG_{k,l}p_{l}/\sigma^2+1 \right)$ on both sides of \eqref{A1}, we obtain
\begin{align}
&\mathrm{ln}\left(\sum_{l=1}^K\frac{G_{k,l}p_{l}}{\sigma^2 }+1
\right)
-
\mathrm{ln}\left(\sum_{l=1,l\neq k}^K\frac{G_{k,l}p^\star_{l}}{\sigma^2}+1\right)
\nonumber\\
&
-\frac{\sum_{l=1,l\neq k}^KG_{k,l}p_{l}/\sigma^2+1}{\sum_{l=1,l\neq k}^KG_{k,l}p^\star_{l}/\sigma^2+1}
+1
\nonumber
\\
&
\leq
\mathrm{ln}\left(\sum_{l=1}^K\frac{G_{k,l}p_{l}}{\sigma^2}+1\right)
-
\mathrm{ln}\left(\sum_{l=1,l\neq k}^K\frac{G_{k,l}p_{l}}{\sigma^2}+1\right)
\nonumber
\\
&
=\mathrm{ln}\left(1+\frac{G_{k,k}p_{k}}{\sum_{l=1,l\neq k}^KG_{k,l}p_{l}+
\sigma^2}\right). \label{A2}
\end{align}
Putting the result of \eqref{A2} into $\widetilde{\Phi}_{m}(\mathbf{p}|\mathbf{p}^\star)$ in \eqref{Phi} and since $a_mx^{-b_m}$ is a decreasing function of $x$, we immediately prove
\begin{align}
&\widetilde{\Phi}_{m}(\mathbf{p}|\mathbf{p}^\star)
\nonumber\\
&\geq a_m\Bigg[
\sum_{k\in\mathcal{Y}_m} \frac{\xi BT}{D_m\mathrm{ln}2\,}
\mathrm{ln}\left(1+\frac{G_{k,k}p_{k}}{\sum_{l\neq k}G_{k,l}p_{l}+
\sigma^2}\right)+A_m
\Bigg]^{-b_m}
\nonumber\\
&
=
\Phi_{m}(\mathbf{p}).
\end{align}

To prove part (ii), we first notice that $\widetilde{\Phi}_m(\mathbf{p}|\mathbf{p}^{\star})=h_m\left( g_m(\mathbf{p}|\mathbf{p}^\star)\right)$ is a composition function of $h_m\circ g_m$, where
$h_m(x)=a_mx^{-b_m}$ and
\begin{align}
g_m(\mathbf{p}|\mathbf{p}^\star)
&=
\sum_{k\in\mathcal{Y}_m} \frac{\xi BT}{D_m\mathrm{ln}2\,}
\Bigg[
\mathrm{ln}\left(\sum_{l=1}^K\frac{G_{k,l}p_{l}}{\sigma^2}+1\right)
\nonumber\\
&\quad{}
-
\mathrm{ln}\left(\sum_{l=1,l\neq k}^K\frac{G_{k,l}p^\star_{l}}{\sigma^2 }+1\right)
\nonumber
\\&
\quad{}
-\frac{\sum_{l=1,l\neq k}^KG_{k,l}p_{l}/\sigma^2+1}{\sum_{l=1,l\neq k}^KG_{k,l}p^\star_{l}/\sigma^2+1}
+1
\Bigg]
+A_m. \label{B1}
\end{align}
Since $\nabla h_m(x)=-a_mb_mx^{-b_m-1}\leq 0$ and $\nabla^2 h_m(x)=a_mb_m(b_m+1)x^{-b_m-2}\geq0$, the function $h_m(x)$ is convex and nonincreasing.
Adding to the fact that $g_m(\mathbf{p}|\mathbf{p}^\star)$ is a concave function of $\mathbf{p}$, we immediately prove the convexity of $\widetilde{\Phi}_m$ using the composition rule \cite[Ch. 3, pp. 84]{opt1}.

Finally, to prove $\widetilde{\Phi}_{m}(\mathbf{p}^\star|\mathbf{p}^\star)=\Phi_{m}(\mathbf{p}^\star)$ and $\nabla_{\mathbf{p}}\widetilde{\Phi}_{m}(\mathbf{p}^\star|\mathbf{p}^\star)=\nabla_{\mathbf{p}}\Phi_{m}(\mathbf{p}^\star)$, we put $\mathbf{p}=\mathbf{p}^\star$ into the functions $\{\widetilde{\Phi}_{m},\Phi_m,\nabla_{\mathbf{p}}\widetilde{\Phi}_{m},\nabla_{\mathbf{p}}\Phi_{m}\}$, and we immediately obtain part (iii).

\section{Proof of Proposition 2}

To prove this proposition, the Lagrangian of $\mathrm{P}2$ is
\begin{align}
L&=\mu+\sum_{k=1}^K\nu_k\Bigg[\beta_ka_k\left(
\frac{\xi BT}{D_k}\mathrm{log}_2\left(1+\frac{G_{k,k}p_{k}}{\sigma^2} \right)+A_k
\right)^{-b_k}
\nonumber\\
&\quad{}
-\mu\Bigg]
+\chi\left(\mathop{\sum}_{k=1}^Kp_k-P\right)-\sum_{k=1}^K\theta_kp_k, \label{C0}
\end{align}
where $\{\nu_k,\chi,\theta_k\}$ are non-negative Lagrange multipliers.
According to the KKT conditions $\partial L/\partial \mu^*=0$ and $\partial L/\partial p_k^*=0$ \cite{opt1}, the optimal $\{\mu^*,\,p_k^*,\,\nu_k^*,\,\chi^*,\theta_k^*\}$ must together satisfy
\begin{align}
&1-\sum_{k=1}^K\nu_k^*=0,\quad
\chi^*+\nu_k^*\times F_k(p_k^*)
=\theta_k^*,\quad \forall k, \label{C1}
\end{align}
where
\begin{align}
F_k(x)&=-\frac{\beta_ka_kb_k\xi BT}{D_k\mathrm{ln}2\,(\sigma^2G_{k,k}^{-1}+x)}
\nonumber\\
&\quad{}
\times
\left[\frac{\xi BT}{D_k}\mathrm{log}_2\left(1+\frac{G_{k,k}x}{\sigma^2} \right)+A_k
\right]^{-b_k-1}
,
\end{align}
with $x\geq 0$ ($x\neq0$ if $A_k=0$).
Notice that $F_k(x)<0$ holds for any $x\geq 0$.
Based on the result of \eqref{C1}, it is clear that
$\sum_{k=1}^K\frac{\theta_k^*-\chi^*}{F_k(p_k^*)}=1$.
Adding to the fact that $\theta_k^*\geq 0$ and $F_k(p_k^*)<0$, we must have $\chi^*\neq 0$.
Now we will consider two cases.
\begin{itemize}
\item $p_k^*=0$. In this case, $\beta_ka_ku_k^{-b_k}\leq \mu^*$ must hold.

\item $p_k^*>0$.
In such a case, based on the complementary slackness condition, we must have $\theta_k^*=0$.
Putting $\theta_k^*=0$ into \eqref{C1} and using $\chi^*\neq 0$, $\nu_k^*\neq0$ holds.
Using $\nu_k^*\neq0$ and the complementary slackness condition, $\beta_ka_k\left(
\frac{\xi BT}{D_k}\mathrm{log}_2\left(1+\frac{G_{k,k}p_{k}^*}{\sigma^2} \right)+A_k
\right)^{-b_k}=\mu^*$ for all $k$.
\end{itemize}
Combining the above two cases gives \eqref{pk*} and  the proposition is proved.

\section{Proof of Proposition 3}
To prove this proposition, we need the following lemma for $\nabla_{\mathbf{p}}\Xi_m(\mathbf{p})$.
\begin{lemma}
If $\mu\leq\mu_0$, the gradient function $\nabla_{\mathbf{p}}\Xi_m(\mathbf{p})$ satisfies the following:

\noindent(i) $\left\Vert\nabla_{\mathbf{p}}\Xi_m(\mathbf{p})\right\Vert_{2}\leq
L_{2}/\beta_m$;

\noindent(ii) $\left\Vert\nabla_{\mathbf{p}}\Xi_m(\mathbf{p})-\nabla_{\mathbf{p}}\Xi_m(\mathbf{p}')\right\Vert_{\infty}\leq
L_{1}/\beta_m\left\Vert\mathbf{p}-\mathbf{p}'\right\Vert_2$.
\end{lemma}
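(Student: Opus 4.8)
\emph{Proof plan (Lemma 1).} I would argue directly from the closed form of $\partial\Xi_m/\partial p_j$ in \eqref{ximpartial}. Abbreviate $S_m(\mathbf{p}):=\sum_{k\in\mathcal{Y}_m}\frac{\xi BT}{D_m}\mathrm{log}_2\big(1+G_{k,k}p_k/\sigma^2\big)+A_m$, so that $\Xi_m=a_mS_m^{-b_m}$ and \eqref{ximpartial} becomes $\partial\Xi_m/\partial p_j=-\frac{a_mb_m\xi BT\,\mathbb{I}_{\mathcal{Y}_m}(j)}{D_m\mathrm{ln}\,2\,(\sigma^2G_{j,j}^{-1}+p_j)}\,S_m^{-b_m-1}$. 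The hypothesis $\mu=\mathop{\mathrm{max}}_m\beta_m\Xi_m(\mathbf{p})\le\mu_0$ gives, for every $m$, $\beta_ma_mS_m^{-b_m}\le\mu_0$, hence $S_m^{-b_m}\le\mu_0/(\beta_ma_m)$ and, raising to the power $1/b_m$, $S_m^{-1}\le(\mu_0/(\beta_ma_m))^{1/b_m}$ (the case $b_m=0$ is trivial, as then $\Xi_m$ is constant and $\nabla_{\mathbf{p}}\Xi_m=\mathbf{0}$). Multiplying these yields $S_m^{-b_m-1}\le(\mu_0/(\beta_ma_m))^{1+1/b_m}$ and $S_m^{-b_m-2}\le(\mu_0/(\beta_ma_m))^{1+2/b_m}$; this is the only way the assumption $\mu\le\mu_0$ enters. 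I will also use repeatedly the elementary inequality $1/(\sigma^2G_{j,j}^{-1}+p_j)\le G_{j,j}/\sigma^2$, valid since $p_j\ge0$.

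For part (i), take moduli in \eqref{ximpartial}, replace $(\sigma^2G_{j,j}^{-1}+p_j)^{-1}$ by $G_{j,j}/\sigma^2$ and $S_m^{-b_m-1}$ by its bound, obtaining $|\partial\Xi_m/\partial p_j|\le\frac{a_mb_m\xi BT}{D_m\mathrm{ln}\,2\,\sigma^2}\big(\mu_0/(\beta_ma_m)\big)^{1+1/b_m}\mathbb{I}_{\mathcal{Y}_m}(j)G_{j,j}$. Then $\|\nabla_{\mathbf{p}}\Xi_m\|_2=\big(\sum_j(\partial\Xi_m/\partial p_j)^2\big)^{1/2}$ factors the common scalar out and leaves $\big(\sum_j\mathbb{I}_{\mathcal{Y}_m}(j)G_{j,j}^2\big)^{1/2}=H_m$ (using $\mathbb{I}_{\mathcal{Y}_m}(j)^2=\mathbb{I}_{\mathcal{Y}_m}(j)$); multiplying and dividing by $\beta_m$ gives exactly $L_2/\beta_m$ with $L_2$ as in \eqref{L12}.

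For part (ii), differentiate \eqref{ximpartial} once more, using $\partial S_m/\partial p_i=\frac{\xi BT}{D_m\mathrm{ln}\,2}\,\mathbb{I}_{\mathcal{Y}_m}(i)/(\sigma^2G_{i,i}^{-1}+p_i)$ and the chain rule on $S_m^{-b_m-1}$. A direct computation shows that $\partial^2\Xi_m/\partial p_i\partial p_j$ equals $\frac{a_mb_m\xi BT}{D_m\mathrm{ln}\,2}\frac{\mathbb{I}_{\mathcal{Y}_m}(j)}{\sigma^2G_{j,j}^{-1}+p_j}$ multiplied by $(b_m+1)\frac{\xi BT}{D_m\mathrm{ln}\,2}\,\mathbb{I}_{\mathcal{Y}_m}(i)(\sigma^2G_{i,i}^{-1}+p_i)^{-1}S_m^{-b_m-2}$ for $i\neq j$, with an additional summand $(\sigma^2G_{j,j}^{-1}+p_j)^{-1}S_m^{-b_m-1}$ when $i=j$. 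I would then invoke the fundamental theorem of calculus along the segment, $\nabla_{\mathbf{p}}\Xi_m(\mathbf{p})-\nabla_{\mathbf{p}}\Xi_m(\mathbf{p}')=\big(\int_0^1\nabla^2\Xi_m(\mathbf{p}'+t(\mathbf{p}-\mathbf{p}'))\,dt\big)(\mathbf{p}-\mathbf{p}')$, and the fact that the $\ell_2\!\to\!\ell_\infty$ operator norm of a matrix is the largest $\ell_2$-norm of its rows, so that $\|\nabla_{\mathbf{p}}\Xi_m(\mathbf{p})-\nabla_{\mathbf{p}}\Xi_m(\mathbf{p}')\|_\infty\le\big(\sup_{j}\|j\text{-th row of }\nabla^2\Xi_m\|_2\big)\|\mathbf{p}-\mathbf{p}'\|_2$ over the segment. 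Each row norm is then bounded by the triangle inequality: the diagonal term by using $(\sigma^2G_{j,j}^{-1}+p_j)^{-1}\le G_{j,j}/\sigma^2$ twice and the bound on $S_m^{-b_m-1}$; the off-diagonal contribution, which is a scalar times the vector $[\mathbb{I}_{\mathcal{Y}_m}(i)(\sigma^2G_{i,i}^{-1}+p_i)^{-1}]_i$, by using $G_{j,j}/\sigma^2$ for the scalar prefactor, $\|[\mathbb{I}_{\mathcal{Y}_m}(i)(\sigma^2G_{i,i}^{-1}+p_i)^{-1}]_i\|_2\le H_m/\sigma^2$, and the bound on $S_m^{-b_m-2}$. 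Collecting terms, factoring out $(\mu_0/(\beta_ma_m))^{1+1/b_m}$, multiplying and dividing by $\beta_m$, and using $\mathbb{I}_{\mathcal{Y}_m}(j)\le1$ shows the $j$-th row norm is at most $L_1/\beta_m$, i.e.\ \eqref{L11} with the $\mathop{\mathrm{max}}$ over $k$ specialized to $k=j$; taking the supremum over $j$ (and over the segment) gives (ii).

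The genuinely delicate step is this last one: carrying the indicator functions $\mathbb{I}_{\mathcal{Y}_m}(\cdot)$, the two distinct powers $S_m^{-b_m-1}$ and $S_m^{-b_m-2}$, and the constant $\xi BT/(D_m\mathrm{ln}\,2)$ correctly through the Hessian so that the estimate matches \eqref{L11} term by term; everything else is monotonicity and the triangle inequality. With Lemma 1 established, \textbf{Proposition 3} follows from \eqref{gradient1}--\eqref{gradient2}: summing $\sum_m\alpha_m\le1$, part (ii) gives the Lipschitz constant $L_1$ of $\nabla_{\mathbf{p}}\Upsilon$ in $\mathbf{p}$ and part (i) gives the constant $L_2$ for the two mixed terms, while the $\bm{\alpha}$-block constant is $0$ because $\nabla_{\bm{\alpha}}\Upsilon$ in \eqref{gradient2} does not depend on $\bm{\alpha}$.
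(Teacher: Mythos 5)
Your proof of part (i) coincides with the paper's: bound each component of \eqref{ximpartial} via $(\sigma^2G_{j,j}^{-1}+p_j)^{-1}\le G_{j,j}/\sigma^2$ and $S_m^{-b_m-1}\le(\mu_0/(\beta_m a_m))^{1+1/b_m}$, then take the $\ell_2$-norm to produce $H_m$ and hence $L_2/\beta_m$. For part (ii) you take a correct but differently organized route. The paper never forms the Hessian: it factors $\nabla_{p_j}\Xi_m=h_m(\mathbf{p})\,g_{m,j}(\mathbf{p})$ with $h_m=-a_mb_mS_m^{-b_m-1}$ and $g_{m,j}$ the per-coordinate rate derivative, bounds $|h_m|$, $|g_{m,j}|$ and the Lipschitz constant of each factor separately, and then applies $|ab-a'b'|\le|a|\,|b-b'|+|a-a'|\,|b'|$ before maximizing over $j$. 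You instead compute $\nabla^2\Xi_m$ explicitly and bound the $\ell_2\to\ell_\infty$ operator norm by the largest row $\ell_2$-norm via the mean-value inequality along the segment. The two calculations yield literally the same two summands: your diagonal contribution $|h_m|\,|\partial_{p_j}g_{m,j}|$ is the $G_{k,k}$ term inside the bracket of \eqref{L11}, and your rank-one block $g_{m,j}\,\nabla_{\mathbf{p}}h_m^{T}$ is the $(b_m+1)H_m$ term, so the estimate matches \eqref{L11} term by term. The paper's factorization avoids second derivatives and the $i=j$ versus $i\neq j$ case split; your route is a more mechanical template (row norms of the Hessian control the $\ell_2\to\ell_\infty$ Lipschitz constant of the gradient). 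Both arguments share the same implicit requirement that the lower bound \eqref{lemma1} on $S_m$ hold at every point used (the paper writes its factor bounds as suprema over $\mathcal{P}$), so neither is more rigorous on that point, and your aside on the degenerate case $b_m=0$ is a small addition the paper omits.
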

\begin{proof}
To begin with, the assumption $\mu=\mathrm{max}_m\beta_m\Xi_m(\mathbf{p})\leq \mu_0$ gives
\begin{align}
&\sum_{k\in\mathcal{Y}_m} \frac{\xi BT}{D_m}\mathrm{log}_2\left(1+\frac{G_{k,k}p_{k}}{\sigma^2} \right)+A_m
\geq\left(\frac{\beta_ma_m}{\mu_0}\right)^{\frac{1}{b_m}}. \label{lemma1}
\end{align}
Based on \eqref{lemma1} and the expression of $\partial \Xi_m/\partial p_j$ in \eqref{ximpartial}, we have
\begin{align}
\Big|\frac{\partial \Xi_m}{\partial p_j}\Big|
&\leq  a_mb_m\left(\frac{\beta_ma_m}{\mu_0}\right)^{-1-1/b_m}
\frac{\xi BT\mathbb{I}_{\mathcal{Y}_m}(j)}{D_m\mathrm{ln}2\,(\sigma^2G_{j,j}^{-1}+p_j)}
\nonumber \\
&\leq
\frac{a_mb_m\xi BTG_{j,j}\mathbb{I}_{\mathcal{Y}_m}(j)}{D_m\mathrm{ln}2\,\sigma^2}
\left(\frac{\mu_0}{\beta_ma_m}\right)^{1+1/b_m},
\end{align}
where the second inequality is due to $p_j\geq 0$.
Putting the above result into \eqref{gradient}, and based on the definition of $L_{2}$ in \eqref{L12}, part (i) is immediately proved.

Next, to prove part (ii),
we notice that the derivative in \eqref{ximpartial} can be rewritten as $\nabla_{p_j}\Xi_m(\mathbf{p})=h_m(\mathbf{p})\times g_{m,j}(\mathbf{p})$, with the auxiliary functions
\begin{subequations}
\begin{align}
h_m(\mathbf{p})&=
-a_mb_m\Bigg[
\sum_{k\in\mathcal{Y}_m} \frac{\xi BT}{D_m}\mathrm{log}_2\left(1+\frac{G_{k,k}p_{k}}{\sigma^2} \right)
\nonumber\\
&\quad{}
+A_m
\Bigg]^{-b_m-1},
\\
g_{m,j}(\mathbf{p})&=\frac{\xi BT\mathbb{I}_{\mathcal{Y}_m}(j)}{D_m\mathrm{ln}2\,\,(\sigma^2G_{j,j}^{-1}+p_j)}.
\end{align}
\end{subequations}
Using the result in \eqref{lemma1} and due to $p_j\geq 0$, we have
\begin{align}
|h_m(\mathbf{p})|&\leq
a_mb_m\left(\frac{\mu_0}{\beta_ma_m}\right)^{1+1/b_m},
\nonumber\\
|g_{m,j}(\mathbf{p})|&\leq \frac{\xi BTG_{j,j}\mathbb{I}_{\mathcal{Y}_m}(j)}{D_m\mathrm{ln}2\,\sigma^2}.
 \label{D3-1}
\end{align}
Furthermore, according to Lipschitz conditions \cite{BFOM2} of $h_m$ and $g_{m,j}$, they satisfy
\begin{subequations}
\begin{align}
|h_m(\mathbf{p})-h_m(\mathbf{p}')|
&\leq
\mathop{\mathrm{sup}}_{\mathbf{p}\in\mathcal{P}}\,\left\Vert\nabla_{\mathbf{p}}h_m(\mathbf{p})||_{2}\times||\mathbf{p}-\mathbf{p}'\right\Vert_2
\nonumber\\
&\leq
\frac{a_mb_m(b_m+1)\xi BTH_m}{D_m\mathrm{ln}2\,\sigma^2}
\nonumber\\
&\quad{}
\times
\left(\frac{\mu_0}{\beta_ma_m}\right)^{1+2/b_m}
\left\Vert\mathbf{p}-\mathbf{p}'\right\Vert_2, \label{D3-2a}
\\
|g_{m,j}(\mathbf{p})-g_{m,j}(\mathbf{p}')|&\leq
\mathop{\mathrm{sup}}_{\mathbf{p}\in\mathcal{P}}\,\left\Vert\nabla_{\mathbf{p}}g_{m,j}(\mathbf{p})\right\Vert_{2}\times\left\Vert\mathbf{p}-\mathbf{p}'\right\Vert_2
\nonumber\\
&
\leq
\frac{\xi BTG_{j,j}^2\mathbb{I}_{\mathcal{Y}_m}(j)}{D_m\mathrm{ln}2\,\sigma^4}\left\Vert\mathbf{p}-\mathbf{p}'\right\Vert_2. \label{D3-2b}
\end{align}
\end{subequations}
As a result, the following inequality is obtained:
\begin{align}
&|\nabla_{p_j}\Xi_m(\mathbf{p})-\nabla_{p_j}\Xi_m(\mathbf{p}')|
\nonumber\\
&
\leq
|h_m(\mathbf{p})|\,|g_{m,j}(\mathbf{p})-g_{m,j}(\mathbf{p}')|
\nonumber\\
&\quad{}
+|h_m(\mathbf{p})-h_m(\mathbf{p}')|\,|g_{m,j}(\mathbf{p}')| \nonumber
\\
&
\leq
\Bigg[
a_mb_m\left(\frac{\mu_0}{\beta_ma_m}\right)^{1+1/b_m}\frac{\xi BTG_{j,j}^2}{D_m\mathrm{ln}2\,\sigma^4}
+
\left(\frac{\mu_0}{\beta_ma_m}\right)^{1+2/b_m}
\nonumber\\
&\quad{}
\times
\frac{a_mb_m(b_m+1)\xi^2B^2T^2G_{j,j}H_m}{D_m^2\mathrm{ln}^22\sigma^4}\Bigg]
\left\Vert\mathbf{p}-\mathbf{p}'\right\Vert_2
\nonumber\\
&
\leq L_{1}/\beta_m\left\Vert\mathbf{p}-\mathbf{p}'\right\Vert_2
, \label{D4}
\end{align}
where the first inequality is due to $|ab+cd|\leq|a||b|+|c||d|$, and the second inequality is obtained from \eqref{D3-1} and \eqref{D3-2a}--\eqref{D3-2b}.
By taking the maximum of \eqref{D4} for all $j$, part (ii) is proved.
\end{proof}

Based on \textbf{Lemma 1}, we are now ready to prove the proposition.
In particular, according to \cite{BFOM1,BFOM2,BFOM3}, the function $\Upsilon(\bm{\alpha},\mathbf{p})$ is $(L_{1},L_{2},L_{2},0)$--smooth if and only if
\begin{subequations}
\begin{align}
&\left\Vert
\left[\mathop{\sum}_{m=1}^{M}\alpha_m\beta_m\nabla_\mathbf{p}\Xi_m(\mathbf{p})\right]
-
\left[\mathop{\sum}_{m=1}^{M}\alpha_m\beta_m\nabla_\mathbf{p}\Xi_m(\mathbf{p}')\right]
\right\Vert_{\infty}
\nonumber\\
&
\leq
L_{1}
\left\Vert\mathbf{p}-\mathbf{p}'\right\Vert_1,
\label{Lips1}
\\
&\left\Vert
\left[\mathop{\sum}_{m=1}^{M}\alpha_m\beta_m\nabla_\mathbf{p}\Xi_m(\mathbf{p})\right]
-
\left[\mathop{\sum}_{m=1}^{M}\alpha_m'\beta_m\nabla_\mathbf{p}\Xi_m(\mathbf{p})\right]
\right\Vert_{\infty}
\nonumber\\
&
\leq
L_{2}
\left\Vert\bm{\alpha}-\bm{\alpha}'\right\Vert_1, \label{Lips2}
\\
&\Big\Vert
\left[\beta_1\Xi_1(\mathbf{p}),\cdots,\beta_M\Xi_M(\mathbf{p}) \right]
\nonumber\\
&-\left[\beta_1\Xi_1(\mathbf{p}'),\cdots,\beta_M\Xi_M(\mathbf{p}') \right]
\Big\Vert_{\infty}
\leq
L_{2}
\left\Vert\mathbf{p}-\mathbf{p}'\right\Vert_1,
\label{Lips3}
\\
&\Big\Vert
\left[\beta_1\Xi_1(\mathbf{p}),\cdots,\beta_M\Xi_M(\mathbf{p}) \right]
\nonumber\\
&
-\left[\beta_1\Xi_1(\mathbf{p}),\cdots,\beta_M\Xi_M(\mathbf{p}) \right]
\Big\Vert_{\infty}
\leq
0\,
\left\Vert\bm{\alpha}-\bm{\alpha}'\right\Vert_1,
\label{Lips4}
\end{align}
\end{subequations}
for any $\mathbf{p}, \mathbf{p}'\in\mathcal{P}$ and $\bm{\alpha}, \bm{\alpha}'\in\mathcal{A}$.
To prove the above inequalities, we need to upper bound the left hand sides of \eqref{Lips1}--\eqref{Lips4} and compare the bounds with the right hand sides of \eqref{Lips1}--\eqref{Lips4}.
To this end, the left hand side of \eqref{Lips1} is bounded as
\begin{align}
&\left\Vert
\mathop{\sum}_{m=1}^{M}\alpha_m\beta_m\left[\nabla_\mathbf{p}\Xi_m(\mathbf{p})
-\nabla_\mathbf{p}\Xi_m(\mathbf{p}')\right]
\right\Vert_{\infty}
\nonumber\\
&\leq\mathop{\sum}_{m=1}^{M}\alpha_m\beta_m \left\Vert
\nabla_\mathbf{p}\Xi_m(\mathbf{p})
-\nabla_\mathbf{p}\Xi_m(\mathbf{p}')
\right\Vert_{\infty} \nonumber
\\
&
\leq
L_{1}\left\Vert\mathbf{p}-\mathbf{p}'\right\Vert_2,
\end{align}
where the last inequality is from \textbf{Lemma 1}.
Further due to $\left\Vert\mathbf{p}-\mathbf{p}'\right\Vert_2\leq\left\Vert\mathbf{p}-\mathbf{p}'\right\Vert_1$, the equation \eqref{Lips1} is proved.
On the other hand, the left hand side of \eqref{Lips2} is upper bounded as
\begin{align}
&\left\Vert
\mathop{\sum}_{m=1}^{M}(\alpha_m-\alpha_m')\beta_m\nabla_\mathbf{p}\Xi_m(\mathbf{p})
\right\Vert_{\infty}
\nonumber\\
&\leq\mathop{\sum}_{m=1}^{M}|\alpha_m-\alpha_m'| \left\Vert\beta_m
\nabla_\mathbf{p}\Xi_m(\mathbf{p})
\right\Vert_{\infty}
\nonumber\\
&
\leq
L_{2}
\left\Vert\bm{\alpha}-\bm{\alpha}'\right\Vert_1,
\end{align}
where the last inequality is from \textbf{Lemma 1}.
In addition, the left hand side of \eqref{Lips3} can be upper bounded as
\begin{align}
&\left\Vert
\left[\beta_1\Xi_1(\mathbf{p})-\beta_1\Xi_1(\mathbf{p}'),\cdots,\beta_M\Xi_M(\mathbf{p})-\beta_M\Xi_M(\mathbf{p}') \right]
\right\Vert_{\infty}
\nonumber\\
&\leq
\mathop{\mathrm{max}}_{m=1,\cdots,M}~\beta_m\times\mathop{\mathrm{sup}}_{\mathbf{p}\in\mathcal{P}}\,\left\Vert\nabla_{\mathbf{p}}\Xi_m(\mathbf{p})\right\Vert_{2}\times
\left\Vert\mathbf{p}-\mathbf{p}'\right\Vert_2
\nonumber\\
&\leq
L_{2}
\left\Vert\mathbf{p}-\mathbf{p}'\right\Vert_1.
\end{align}
Finally, since the left hand side of \eqref{Lips4} is zero, we immediately have that \eqref{Lips4} holds.

\end{document}